\newtheorem{prop}{Proposition}
\newtheorem{thm}{Theorem}
\begin{document}

\title{Costly Circuits, Submodular Schedules:\\ Hybrid Switch Scheduling for Data Centers}
%
%
%
%
%

\numberofauthors{3} 
%

\author{
%
%
\alignauthor
Shaileshh Bojja Venkatakrishnan\\
       \affaddr{University of Illinois Urbana-Champaign}\\
       \email{bjjvnkt2@illinois.edu}
\alignauthor
Mohammad Alizadeh\\
       \affaddr{Massachusetts Institute of Technology}\\
       \email{alizadeh@csail.mit.edu}
\alignauthor
Pramod Viswanath\\
       \affaddr{University of Illinois Urbana-Champaign}\\
       \email{pramodv@illinois.edu}
}



\maketitle
\begin{abstract}
Hybrid switching -- in which a high bandwidth circuit switch (optical or wireless) is used in conjunction with a low bandwidth packet switch -- is a promising alternative to interconnect servers in today's large scale data-centers. Circuit switches offer a very high link rate, but incur a non-trivial reconfiguration delay which makes their scheduling challenging. In this paper, we demonstrate a lightweight, simple and nearly-optimal scheduling algorithm that trades-off configuration costs with the benefits of reconfiguration that match the traffic demands. The algorithm has strong connections to submodular optimization, has performance at least half that of the optimal schedule and strictly outperforms state of the art in a variety of traffic demand settings. These ideas naturally generalize: we see that indirect routing leads to exponential connectivity; this is another phenomenon of the power of multi hop routing, distinct from the well-known load balancing effects.
\end{abstract}




\section{Introduction}

Modern data centers are massively scaling up to support demanding
applications such as large-scale web services, big data analytics, and
cloud computing. The computation in these applications is distributed
across tens of thousands of interconnected servers. As the number and
speed of servers increases,\footnote{ Servers with 10Gbps network
  interfaces are common today and 40/100Gbps servers are being
  deployed.} providing a fast, dynamic, and economic switching
internconnect in data centers constitutes a topical networking
challenging. Typically, data center networks use multi-rooted tree
designs: the servers are arranged in racks and an Ethernet switch at
top of the rack (ToR) connects the rack of servers to a one or more
aggregation (or spine) layers. These designs use multiple paths
between the ToRs to deliver uniform high bisection bandwidth, and
consist of a large number of high speed electronic packet switches
that provide fine-grained switching capabilities but at poor
speed/cost ratios.

Recent work has proposed the use of high speed circuit switches based
on optical~\cite{180616,
  farrington2011helios, wang2011c} or
wireless~\cite{kandula2009flyways, zhou2012mirror,
  hamedazimi2014firefly} links to interconnect the ToRs. These
architecures enable a dynamic topology tuned to actual traffic
patterns, and can provide a much higher aggregate capacity than a
network of electronic switches at the same price point, consume
significantly less power, and reduce cabling complexity. For instance,
Farrington~\cite{farrington2012optics} reports 2.8$\times$, 6$\times$,
4.7$\times$ lower cost, power, and cabling complexity using optical
circuit switching relative to a baseline network of electronic
switches.

The drawback of circuit switches, however, is that their switching
configuration time is much slower than electronic switches. Depending
on the specific technology, reconfiguring the circuit switch can take
a few milliseconds (e.g., for 3D MEMS optical circuit
switches~\cite{180616,
  farrington2011helios, wang2011c}) to 10s of microseconds (e.g., for
2D MEMs wavelength-selective switches~\cite{porter2013integrating}).
During this reconfiguration period, the circuit switch cannot carry
any traffic. By contrast, electronic switches can make per-packet
switching decisions at sub-microsecond timescales. This makes the
circuit switch suitable for routing stable traffic or bursts of
packets (e.g., hundreds to thousands of packets at a time), but not
for sporadic traffic or latency sensitive packets. A natural approach
is then to have a {\em hybrid} circuit/packet switch architecture: the
circuit switch can handle traffic flows that have heavy intensity but
also require sparse connections, while a lower capacity packet switch
handles the complementary (low intensity, but dense connections)
traffic flows~\cite{farrington2011helios}.


With this hybrid architecture, the relatively low intensity traffic is
taken care of by the packet switch --- switch scheduling here can be
done dynamically based on the traffic arrival and is a well studied
topic~\cite{mckeown1999achieving, keslassy2005optimal,
  mckeown1999islip}. On the other hand, scheduling the circuit switch,
based on the heavy traffic demand matrix, is still a fundamental
unresolved question. Consider an architecture where a centralized
scheduler samples the traffic requirements at each of the ToR ports at
regular intervals ($W$, of the order of 100$\mu$s--1ms), and looks to
find the schedule of circuit switch configurations over the interval
of $W$ that is ``matched" to the traffic requirements. The challenge
is to balance the overhead of reconfiguration the circuits
with the capability to be flexible and meet the traffic demand
configuration.

The centralized scheduler must essentially decide a sequence of {\em
  matchings} between sending and receiving ToRs which the circuit
switch then implements. For an optical circuit switch, for instance,
the switch realizes the schedule by appropriately configuring its MEMs
mirrors. As another example, in a broadcast-select optical ring
architecture~\cite{cao2014joint}, the ToRs implement the controller's
schedule by tuning in to the appropriate wavelength to receive traffic
from their matching sender as dictated by the schedule.


Hence, we need a scheduling algorithm that decides the state (i.e.,
matching) of the circuit switch at each time and also a routing
protocol to decide on an appropriate route packets can take to reach
their destination ToR port. This is a challenging problem and entails
making several choices on: (a) number of matchings, (b) choice of
matchings (switch configuration), (c) durations of the matchings and
(d) the routing protocol, in each interval $W$. Mathematically, this
leads to a well defined optimization problem, albeit involving both
combinatorial and real-valued variables. Even special cases of this
problem~\cite{li2003scheduling} are NP hard to solve exactly. Recent
papers have proposed heuristic algorithms to address this scheduling
problem. In~\cite{Solstice} the authors present Solstice --- a greedy
perfect-matching based heuristic for a hybrid electrical-optical
switch. Experimental evaluations show Solstice performing well over a
simple baseline (where the schedules are provided by a truncated
Birkoff-von Neumann decomposition of the traffic matrix), although no
theoretical guarantees are presented. Indirect routing in a
distributed setting, but without considerations of configuration
switching costs, is studied in another recent work
\cite{cao2014joint}.

\subsection{Our Contributions}
We first focus on routing policies where packets are sent from the
source port to the destination port only via a direct link connecting
the two ports, leading to {\em direct} or {\em single-hop} routing.
Our main result here is an approximately optimal, very simple and fast
algorithm for computing the switch schedule in each interval. The
algorithm, which we christen Eclipse, has a performance that is at
least half that of optimal for every instance of the traffic demands,
and experimentally shows a strict and consistent improvement over the
state-of-the-art. A key technical contribution here is the
identification of a submodularity structure~\cite{azar2012efficient}
in the problem, which allows us to make connections to submodular
function maximization and the circuit switch scheduling problem with
reconfiguration delay.

Next, we consider routing polices where packets are allowed to reach
their destination after (potentially) transiting through many
intermediate ports, leading to {\em indirect} or {\em multi-hop}
routing. This class of routing policies is motivated by our
observation that if the number of matchings is limited, multi-hop
routing can {\em exponentially} improve the {\em reachability} of
nodes; a novel benefit of multi-hop routing distinct from the
classical and well known load balancing effects
\cite{rabin1989efficient,valiant1990bridging,greenberg2008towards}. We
again identify submodularity in the problem, but the constraints for
this submodular maximization problem are no longer linear and
efficient solutions challenging to find. However, for the important
special case where the sequence of switch configurations have already
been calculated (and the indirect routing policy has to be decided) we
propose a simple and fast greedy algorithm that is near optimal
universally for all traffic requirements. Detailed experimental
demonstrate strong improvements over direct routing, which are
especially pronounced when the switch reconfiguration delays are
relatively large.

The paper is organized as follows. In Section~\ref{sec: System Model}
the model, framework and the problem objective are formally stated
along with a succinct summary of the state of the art.
Section~\ref{sec:direct} focuses on direct routing and
Section~\ref{sec: Indirect Routing} on indirect routing. In
Section~\ref{sec:Evaluations} we present a detailed evaluation of the
proposed algorithms on a variety of traffic inputs. Section~\ref{sec:conclusion} closes with a brief discussion. Technical aspects of the
algorithm and its evaluation, including connections to submodularity
and combinatorial optimization problems are deferred to the
supplementary material.

\section{System Model} \label{sec: System Model} 

In this section, we present our model for a hybrid
circuit/packet-switched network fabric, and formally define our
scheduling problem. Our model closely follows~\cite{Solstice}.


\subsection{Hybrid Switch Model}
We consider an $n$-port network where each port is simultaneously
connected to a circuit switch and a packet switch. A set of nodes are
attached to the ports and communicate over the network. The nodes
could either be individual servers or ``top-of-rack'' switches.

We model the circuit switch as an $n\times n$ cross-bar comprising of
$n$ input ports and $n$ output ports. At any point in time, each input
port can send packets to at most one output port and each output port
can receive packets from at most one input port over the circuit
switch. The circuit switch can be reconfigured to change the
input-output connections. We assume that the packets at the input
ports are organized in
virtual-output-queues~\cite{prabhakar1999speedup} (VOQ) which hold
packets destined to different output ports.

In practice, the circuit switch is typically an optical
switch~\cite{180616, farrington2011helios,
  wang2011c}.\footnote{Designs based on point-to-point wireless links
  have also been proposed~\cite{kandula2009flyways,
    zhou2012mirror}. Our abstract model is general.}  These switches
have a key limitation: changing the circuit configuration imposes a
{\em reconfiguration delay} during which the switch cannot carry any
traffic. The reconfiguration delay can range from few milliseconds to
10s of microseconds depending on the
technology~\cite{porter2013integrating, liu2014circuit}. This makes
the circuit switch suitable for routing stable traffic or bursts of
packets (e.g., hundreds to thousands of packets at a time), but not
for sporadic traffic or latency sensitive packets. Therefore, hybrid
networks also use a (electrical) packet switch to carry traffic that
cannot be handled by the circuit switch. The packet switch operates on
a packet-by-packet basis, but has a much lower capacity than the
circuit switch. For example, the circuit and packet switches might
respectively run at 100Gbps and 10Gbps per port.

We divide time into slots, with each slot corresponding to a
(full-sized) packet transmission time on the circuit switch.  We
consider a {\em scheduling window} of $W\in\mathbb{Z}$ time units.  A
central controller uses measurements of the aggregated traffic demand
between different ports to determine a {\em schedule} for the circuit
switch at the start of each scheduling window. The schedule comprises
of a sequence of configurations and how long to use each configuration
($\S$\ref{sec:objective}). We assume that the delay for each
reconfiguration is $\delta\in\mathbb{Z}$ time units.





\subsection{Traffic Demand}
Let $T\in\mathbb{Z}^{n\times n}$ denote the accumulated traffic at the
start of a scheduling window. We assume $T$ is a feasible traffic
demand, i.e., $T$ is such that $\sum_{j=1}^n T(i,j) \leq W$ and
$\sum_{i=1}^n T(i,j) \leq W$ for all $i,j \in \{1,2,\ldots,n\}$. The
$(i,j)$th entry of $T$ denotes the amount of traffic that is in the
VOQ at node $i$ destined for node $j$. 

We assume that the controller knows $T$.\footnote{Our work is
  orthogonal to how the controller obtains the traffic demand
  estimate.  For example, the nodes could simply report their backlogs
  before each scheduling window, or a more sophisticated prediction
  algorithm could be used.} We also assume that non-zero entries in
the traffic matrix $T$ are bounded as
$2\delta \leq T(i,j) \leq \epsilon W$ for all $i,j\in[n]:T(i,j)>0$ and
some parameter $0<\epsilon < 1$. This is a mild condition because
traffic between pairs of ports that is small relative to $\delta$ is
better served by the packet switch anyway.


Previous measurement studies have shown that the inter-rack traffic in
production data centers is sparse~\cite{Solstice, benson2010network,
  alizadeh2011data, roy2015inside}. Over short periods of time (e.g.,
10s of milliseconds), most nodes communicate with only a small number
of other nodes (e.g., few to low tens). Further, in many cases, a
large fraction of the traffic is sent by a small fraction of
``elephant'' flows~\cite{alizadeh2011data}. While our algorithms and
analysis are general, it is important to note that such sparse traffic
patterns are necessary for hybrid networks to perform well (especially
with larger reconfiguration delay).


\subsection{The Scheduling Problem}
\label{sec:objective}

Given the traffic demand, $T$, our goal is to compute a schedule that
maximizes the total amount of traffic sent over the circuit switch
during the scheduling window $W$. This is desirable to minimize the
load on the slower packet switch. In general, the scheduling problem
involves two aspects:


\noindent{\bf 1. Determining a schedule of circuit switch
  configurations:}
The algorithm must determine a sequence of circuit switch
configurations:
$(\alpha_1, P_1), (\alpha_2, P_2),\ldots, (\alpha_k, P_k).$
Here, $\alpha_i\in\mathbb{Z}$ denotes the duration of the $i^{th}$
switch configuration, and $P_i$ is an $n \times n$ permutation matrix,
where $P_i(s,t)=1$ if input port $s$ is connected to output port $t$
in the $i^{th}$ configuration. For a valid schedule, we must have
$\alpha_1 + \alpha_2 + \ldots + \alpha_k + k\delta \leq W$ since the
total duration of the configurations cannot exceed the scheduling
window $W$.

\noindent{\bf 2. Deciding how to route traffic:}
The simplest approach is to use only direct routes over the circuit
switch. In other words, each node only sends traffic to destinations
to which it has a direct circuit during the scheduling
window. Alternatively, we can allow nodes to use indirect routes,
where some traffic is forwarded via (potentially multiple)
intermediate nodes before being delivered to the destination. Here,
the intermediate nodes buffer traffic in their VOQs for transmission
over a circuit in a subsequent configuration.



In the next section, we begin by formally defining the problem in the
simpler setting with direct routing and developing an algorithm for
this case. Then, in $\S$\ref{sec: Indirect Routing}, we consider the more
general setting with indirect routing.

\noindent {\bf Remark 1.} 
Prior work~\cite{Solstice, li2003scheduling} has considered the
objective of {\em covering} the entire traffic demand in the least
amount of time. For example, the \textsc{ADJUST} algorithm
in~\cite{li2003scheduling} takes the traffic demand $T$ as input and
computes a schedule $(\alpha_1, P_1),\ldots,(\alpha_k, P_k)$ such that
$\sum_{i=1}^k \alpha_i P_i \geq T$ and
$\sum_{i=1}^k \alpha_i + k\delta$ is minimized. Our formulation (and
solution) is more general, since an algorithm which maximizes
throughput over a given time period can also be used to find the
shortest duration to cover the traffic demand (e.g., via binary
search).



\subsection{Related Work}
Before presenting the work in this paper, we briefly summarize related work on this topic. 
Scheduling in crossbar switches is a classical and well studied topic and traditionally it has been used to model the packet switch where the reconfiguration delay is very small. Hence the scheduling solutions proposed -- ranging from centralized Birkhoff-von-Neumann decomposition scheduler~\cite{mckeown1999achieving} on one end to the decentralized load-balanced scheduler~\cite{chang2001load} on the other -- did not account for reconfiguration delay. In a different context (satellite-switched time-division multiple access), works such as~\cite{gopal1985minimizing} computed  schedules that minimized the number of matchings in the schedule.  

With the proposals on hybrid circuit/packet switching systems~\cite{farrington2011helios,wang2011c}, simplified models that factor for the reconfiguration delay were considered. Early works often assumed the delay to be either zero~\cite{inukai1979efficient} or infinity~\cite{ towles2003guaranteed, wu2006nxg05}. The infinite delay setting corresponds to a problem where the number of matchings is minimized. However they still require $O(n)$ matchings. Moderate reconfiguration delays are considered in DOUBLE~\cite{towles2003guaranteed} and other algorithms such as~ \cite{fu2013cost, li2003scheduling, wu2006nxg06} that explicitly take reconfiguration delay into account. The algorithm ADJUST~\cite{li2003scheduling} minmizes the covering time but still requires around $n$ configurations. All of these  algorithms do not benefit from sparse demands and continue to require $O(n)$ configurations~\cite{Solstice}. In a complementary approach,~\cite{dasylva1999optimal} considers conditions on the input traffic matrix under which efficient polynomial time algorithms to compute the optimal schedule exists. Yet other approaches have been to introduce speedup~\cite{mckeown1999islip}, or randomization in the algorithms~\cite{giaccone2003randomized}, however they dont address the basic optimization problem underlying this scenario head-on. Such is the goal of this paper.  

The above algorithms are ``batch'' policies~\cite{wang2015adaptive} in which each computational call returns a schedule for an entire window of time. Another research direction is to consider ``dynamic'' policies where scheduling decisions are made time-slot by time-slot. A variant of the well known MaxWeight algorithm is presented in~\cite{wang2015end} and is shown to be throughput optimal. Fixed-Frame MaxWeight (FFMW) is a frame based policy proposed in~\cite{li2003frame} and has good delay performance. However it requires the arrival statistics to be known in advance. A hysteresis based algorithm that adapts many previously proposed algorithms for crossbar switch scheduling is presented in~\cite{wang2015adaptive}. All these algorithms require perfect queue state information at every instant.

\section{Direct Routing}
\label{sec:direct}
The centralized scheduler samples the ToR ports and arrives at the traffic demand (matrix) $T$ to be met in the upcoming slot. 
In this section, we develop an algorithm, named Eclipse, that takes the traffic demand, $T$,
as input and computes a schedule of matchings (circuit configurations)
and their durations to maximize throughput over the circuit switch; only direct routing of packets from source to destination ports are allowed here. Eclipse is fast, simple and nearly-optimal in every instance of the traffic matrix $T$. Towards a formal understanding of the notion of optimality, 
 consider the following optimization problem:
\begin{align}
\mathrm{maximize} \quad  \left\| \min \left( \sum_{i=1}^k \alpha_i P_i, T \right) \right\|_1  \label{eq:opt problem} \\
\mathrm{s.t.} \quad \alpha_1 + \alpha_2 + \ldots + \alpha_k + k\delta \leq W \\
k \in \mathbb{N}, P_i \in \mathcal{P}, \alpha_i \geq 0~ \forall i\in\{1,2,\ldots,k\}
\end{align}
where $\mathbb{N} = \{1,2,\ldots\}$ and $\mathcal{P}$ is the set
of permutation matrices.

This optimization problem is NP-hard~\cite{li2003scheduling}, and recent work \cite{Solstice} in the literature has focused on heuristic solutions.  Our proposed algorithm has some similarities 
to the prior work in~\cite{Solstice} in that the matchings and their
durations are computed successively in a {\em greedy} fashion. However, the algorithm is overall quite different in terms of both details and ideas; we 
uncover and exploit the underlying {\em submodularity}~\cite{schrijver-book}
structure inherent in the problem to design and analyze the algorithm in a
{\em principled} way.


\subsection{Intuition}
\label{sec: Motivation}
Before a formal presentation and analysis of the algorithm, we begin with an intuitive and less-formal approach to how one might solve this optimization problem.  Consider greedy algorithms with the template shown in
Algorithm~\ref{alg: heuristic}. The template starts with an empty
schedule, and proceeds to add a new matching to the schedule in each
iteration. This process continues until the total duration of the matchings
exceeds the allotted time budget of $W$, at which point the algorithm
terminates and outputs the schedule computed so far. In each
iteration, the algorithm first picks the duration of the matching,
$\alpha$. It then selects the {\em maximum weight matching} in the
traffic graph whose edge weights are thresholded by $\alpha$ (i.e.,
edge weights $>\alpha$ are clipped to $\alpha$). The traffic graph is
a bipartite graph between $n$ input and $n$ output vertices, with an
edge of weight $T(i,j)$ between input node $i$ and output node $j$. It
only remains to specify how to choose $\alpha$ in each iteration.

\begin{algorithm}[!tbp]
    \SetKwInOut{Input}{Input}
    \SetKwInOut{Output}{Output}
    \Input{Traffic demand $T$, reconfiguration delay $\delta$ and scheduling window size $W$}
    \Output{Sequence of matchings and weights: $(\alpha_1,P_1),\ldots,(\alpha_k, P_k)$}
$\mathrm{sch} \leftarrow \{ \}$  \tcp*{schedule}
$k \leftarrow 0$ \;
$T_\mathrm{rem} \leftarrow T$ \tcp*{traffic remaining}
\While{$\sum_{i=1}^k (\alpha_i + \delta) \leq W$}
{
$k \leftarrow k + 1$\;
\textit{Decide on a duration $\alpha$ for the matching}\;
$M \leftarrow \mathrm{argmax}_{M\in\mathcal{M}}  \|\min(\alpha M, T_\mathrm{rem})\|_1  $ \;
$\mathrm{sch} \leftarrow \mathrm{sch} \cup \{(\alpha, M)\}$ \;
$T_\mathrm{rem} \leftarrow T_\mathrm{rem} - \min(\alpha M, T_\mathrm{rem})$ \;
}
\If{$\sum_{i=1}^k (\alpha_i + \delta) > W$}{$\mathrm{sch} \leftarrow \mathrm{sch} \backslash \{(\alpha, M)\}$\; }
    \caption{A general greedy algorithm template} \label{alg: heuristic}
\end{algorithm}

Consider an exercise where we vary the matching duration $\alpha$ from
$0$ to $W$ and compute the maximum weight matching in the thresholded
traffic graph for each $\alpha$. For a typical traffic matrix, this
results in a curve similar to the solid-blue line in
Fig.~\ref{fig:utilization}. Notice that the value of the maximum
weight matching is precisely equal to the sum-throughput that can be
achieved in that round of the switch schedule. It is straightforward to see that the maximum weight matching curve has the following properties: (a) it
is non-decreasing and (b) piecewise linear. These are explained as
follows: when $\alpha$ is very small a lot of the edges in the traffic
graph have a weight that is saturated at $\alpha$. Hence it is likely
to find a perfect matching with total weight of $n\alpha$. As such the
slope of the curve when $\alpha$ is small is $n$. However, as $\alpha$
becomes large there are increasingly fewer edges whose
weights are saturated at $\alpha$ and, correspondingly, theslope reduces. When
$\alpha$ is so large that all of the edge weights are strictly smaller
than $\alpha$, then the value of the maximum weight matching does not
change even with any further increase in $\alpha$ and the curve
ultimately flattens out.


\begin{figure}[!tbp]
  \centerline{\includegraphics[height=70mm]{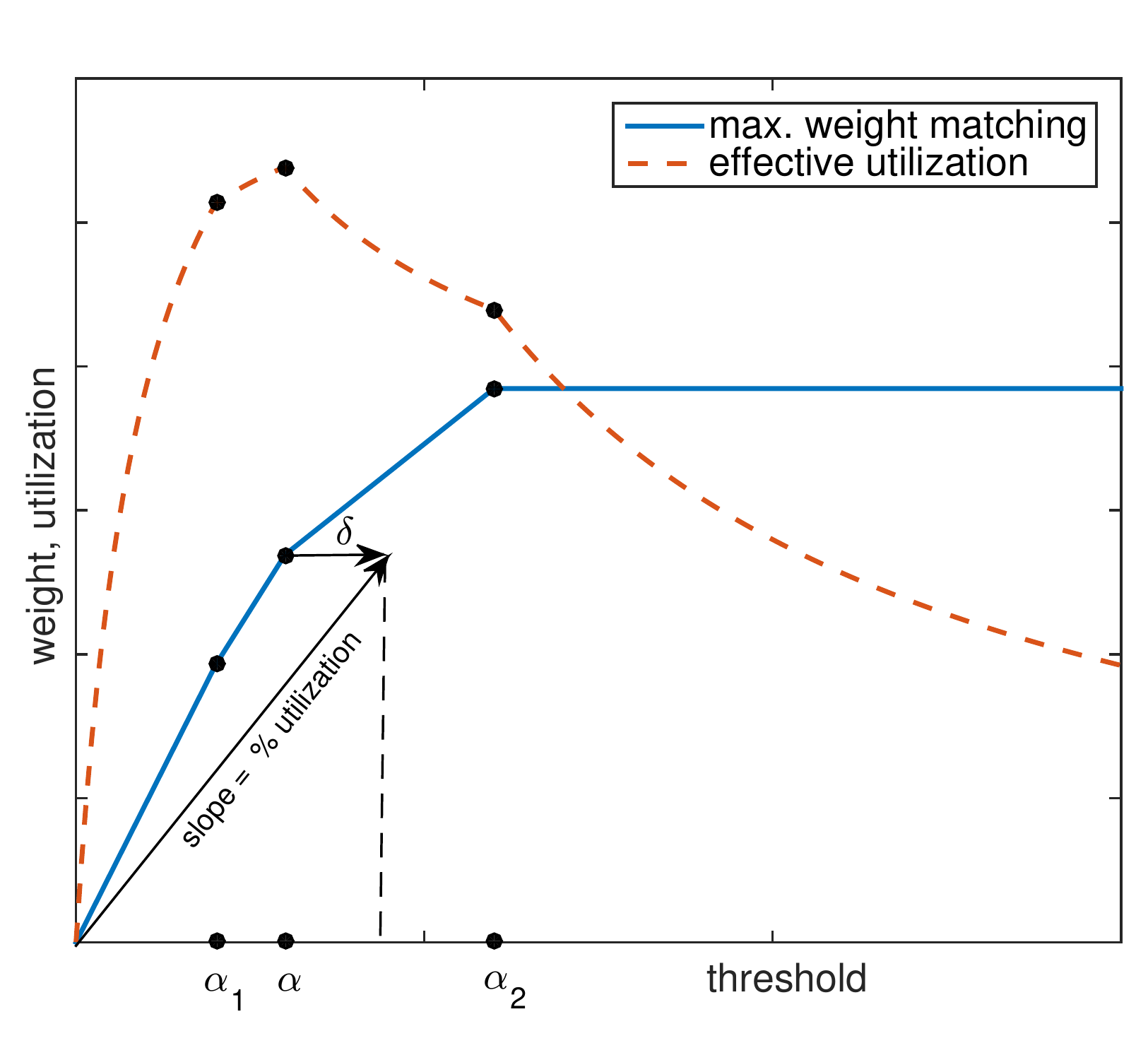}}
    \caption{Throughput of max. weight matching as a function of threshold duration. The effective utilization curve of the matchings is also shown.}
    \label{fig:utilization}
\end{figure}

Two operating points of interest, considering
Fig.~\ref{fig:utilization}, are (a) the largest $\alpha$ where the
slope of the curve is maximum ($=n$ in the typical case where every ingress/egress port has traffic)
and (b) the smallest $\alpha$ where the value of the maximum weight
matching is the largest. These points have been denoted by $\alpha_1$
and $\alpha_2$ in Fig.~\ref{fig:utilization} respectively. Setting
$\alpha=\alpha_1$ is interesting because it results in a matching
where the links are all fully utilized. For example, the
Solstice algorithm presented in~\cite{Solstice} implicitly
adopts this operating point. On the other hand, $\alpha=\alpha_2$
gives a matching that achieves the largest possible sum-throughput in
that round.

However we note that both choices of $\alpha$ are less than ideal for
the following reasons.  Recall that after every round of switching we
incur  a delay of $\delta$ time units. As such if the value of
$\alpha_1$ is  small (say $\O(\delta)$) in each round, then the number
of matchings, and hence the time wasted due to the reconfiguration
delay, becomes large. As a concrete example, consider the transpose of the traffic
matrix $T_1 = [A_1^t {\bf b}_1^t]$ where  $A_1$ is a sparse $(n-1)\times n$ matrix and 
$\mathbf{b} = [2\delta, 2\delta, \ldots, 2\delta, 0,0,\ldots,0]$ 
comprises of some $k$ entries of value $2\delta$ and $n-k$ entries of
value 0. In other words, we are considering an input where a node or a
collection of nodes have a large number of small flows to a particular
node or vice-versa. For such an instance it is clear that if we insist
on matchings with 100\% utilized links, then the maximum duration of
the matching is $2\delta$ (i.e., $\alpha_1 = 2\delta)$. Thus,
continuing the process described in Algorithm~\ref{alg: heuristic}
results in a sequence of $k$ matchings each of which is only $2\delta$
time units long. Hence in the worst case (if $k > 1/(3\delta)$) about
$1/3$rd of the entire scheduling window is wasted just due to
reconfiguration delay limiting the maximum possible throughput to
$2n/3$. On the other hand, if we had ignored the entries in
$\mathbf{b}$, then we could have scheduled just $A_1$ achieving a
total throughput of $n - 2k\delta \approx n$ for large $n$. We
point out that the phenomenon described above happens in a large family of instances, of which $T_1$ is a specific example. We also emphasize that such instances are  pretty likely to occur in
practice; for example,~\cite[Fig.5-b]{roy2015inside} shows traffic
measurements in a Facebook datacenter where the interactions between
Cache and Webservers lead to traffic matrices having this property.

Similarly for the operating point with $\alpha = \alpha_2$,  consider the traffic matrix $T_2 = \left[ \begin{array}{cc} A_2 & \mathbf{0} \\   \mathbf{0} & B_2 \end{array} \right]$ where $A_2$ is a sparse $(n-2)\times(n-2)$ matrix and
$
B_2 = \left[\begin{array}{cc} 0 & 1 \\ 1 & 0 \end{array}\right].
$
This is a diametrically opposite situation from $T_1$ where a small collection of nodes interact only amongst themselves with no interaction outside. Such a situation occurs, for example, in multi-tenant cloud-computing datacenters~\cite{shieh2010seawall} where individual tenants run their jobs on small clusters of servers. In such a case, the value of the maximum weight matching can be maximum for a large $\alpha$. For $T_2$ the maximum value occurs at $\alpha = 1-\delta$ (i.e., $\alpha_2 = 1-\delta$), resulting in a schedule with just one matching of duration $1-\delta$ and potentially missing a lot of traffic for $A_2$. For example, if $A_2$ is uniformly $k$-sparse, we miss out roughly $(k-1)n/k$ units of traffic. On the other hand, by choosing the duration of the matching to be $1/k - \delta$ in each step we can achieve a sum throughput of $n - O(\delta) \approx n$.

In scenarios exemplified by $T_2$, setting $\alpha = 1$ is bad because the utilization of the resulting matching is poor, i.e., a vast majority of the matching links carry only a fraction of their capacity. This can be overcome by insisting that we choose only those matchings with utilization of at least 75\% (say). However, in the case of $T_1$ we observe a poor performance in spite of all matchings having a utilization of 100\%. The issue in this case is that the duration of the matchings are small compared to the reconfiguration delay. Hence to avoid this scenario we can insist on $\alpha \geq 20\delta$ (say) in Algorithm~\ref{alg: heuristic}. Our first main observation is that both of the above heuristics are captured if we consider the {\em effective utilization} of the matchings. We define effective utilization as the ratio mwm$(\alpha)/(\alpha + \delta)$ where mwm$(\alpha)$ denotes the value of the maximum weight matching at $\alpha$. This ratio indicates the overall efficiency of a matching by  including the reconfiguration delay into the duration. In Fig.~\ref{fig:utilization} we plot the effective utilization of the matchings as the red-dotted curve. As can be seen there, the effective utilization at both $\alpha_1$ and $\alpha_2$ is suboptimal. We propose an algorithm that selects $\alpha$ to maximize effective utilization; a detailed description is deferred to Section~\ref{sec: Algorithm}.

The justification for selecting matchings according to the above is further reinforced by the {\em submodularity} structure of the problem (we discuss submodularity in Section~\ref{sec: Submodularity}).  It turns out that for a certain class of submodular maximization problems with linear packing constraints, greedy algorithms take a form that precisely matches the intuitive thought process above: the proposed intuitively correct algorithm is borne out naturally from submodular combinatorial optimization theory. We briefly recall relevant aspects of submodularity and associated optimization algorithms next. 

\subsection{Submodularity} \label{sec: Submodularity}

A set function $f: 2^{[n]}\rightarrow \mathbb{R}$ is said to be {\em submodular} if it has the following property: for every $A,B\subseteq [n]$ we have $f(A\cup B) + f(A \cap B) \leq f(A) + f(B)$. Alternatively, submodular functions are also defined through the property of {\em decreasing marginal values}: for any $S, T$ such that $ T \subseteq S \subseteq [n]$ and $j \notin S$, we have \begin{align} f(S \cup \{j\}) - f(S) \leq f(T \cup \{j\}) - f(T). \end{align}The difference $f(S \cup \{j\}) - f(S)$ is called the incremental marginal value of element $j$ to set $S$ and is denoted by $f_S(j)$. For our purpose we will only focus on submodular functions that are {\em monotone} and {\em normalized}, i.e., for any $S \subseteq T\subseteq [n]$ we have $f(S) \leq f(T)$ and further $f(\{ \}) = 0$.

Many applications in computer science involve maximizing submodular functions with {\em linear packing constraints}. This refers to problems of the form:
\begin{align}
\max f(S) \quad \text{s.t. } A\mathbf{x}_S \leq b \text{ and } S \subseteq [n]
\end{align}
where $A\in [0,1]^{m\times n}, b\in [1,\infty)^{m}$ and $\mathbf{x}_S$ denotes the characteristic vector of the set $S$. Each of the $A_{ij}$'s is a cost incurred for including element $j$ in the solution. The $b_i$'s represent a total budget constraint. A well-known example of a problem in the above form is the Knapsack problem (the objective function in this case is in fact modular).

With the above background, we  formulate the optimization problem under direct
routing as one of  submodular function maximization. Recall that for any given input traffic matrix $T$, the schedule that is computed is described by a sequence of matchings and corresponding durations. Consider the set $\mathcal{M}$ of all perfect matchings in the complete bipartite graph $K_{n\times n}$ with $n$ nodes in each partite. Then any round in the schedule is simply $(\alpha, P) \in \mathbb{Z}\times \mathcal{M}$. The key observation we make now is to view the schedules as a subset of $\mathbb{Z}\times \mathcal{M}$. Formally,  define a {\em switch schedule} as any subset $\{ (\alpha_1,M_1),\ldots,(\alpha_k,M_k)\}$ of $\mathbb{Z}\times\mathcal{M}$. The objective function in our case is the sum-throughput defined as:
\begin{align}
f(\{ (\alpha_1,M_1),\ldots,(\alpha_k,M_k)\}) = \left\| \min \left\{ \sum_{i=1}^k \alpha_i M_i, T \right\} \right\|_1  \label{eq:func defn}
\end{align}
where the minimum is taken entrywise and $\|\cdot \|_1$ refers to the entrywise $L_1$-norm of the matrix. We observe that the function $f$ is submodular, deferring the proof to the supplementary material. 
\begin{thm} \label{thm: single hop submod}
The function $f:2^{\mathbb{Z}\times \mathcal{M}}\rightarrow \mathbb{R}$ defined by Equation~\eqref{eq:func defn} is a monotone, normalized submodular function.
\end{thm}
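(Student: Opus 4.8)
My approach is to recognize $f$ as a separable, coordinatewise concave and non-decreasing function precomposed with a nonnegative, modular (additive) set function; all three claimed properties then follow from standard facts. For a schedule $\mathcal{A}\subseteq\mathbb{Z}\times\mathcal{M}$ define its \emph{capacity matrix} $m(\mathcal{A}) := \sum_{(\alpha,M)\in\mathcal{A}}\alpha M \in \mathbb{R}_{\geq 0}^{n\times n}$, so that $f(\mathcal{A}) = \phi(m(\mathcal{A}))$ with $\phi(X) = \|\min\{X,T\}\|_1 = \sum_{1\leq u,v\leq n}\phi_{uv}(X(u,v))$ and $\phi_{uv}(x) := \min\{x, T(u,v)\}$. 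Each $\phi_{uv}$ is a scalar, concave, non-decreasing function with $\phi_{uv}(0)=0$ (here we use $T\geq 0$). I would also note in one line that $f$ is well defined on the whole (countably infinite) ground set $2^{\mathbb{Z}\times\mathcal{M}}$, since $\phi(X)\leq\|T\|_1 < \infty$ for every $X\geq 0$.

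Normalization is then immediate: $f(\{\}) = \phi(\mathbf{0}) = \sum_{u,v}\phi_{uv}(0) = 0$. Monotonicity is almost as quick: if $\mathcal{A}\subseteq\mathcal{B}$ then $m(\mathcal{A})\leq m(\mathcal{B})$ entrywise, because enlarging a schedule only adds matrices of the form $\alpha M$ with $\alpha\geq 0$ and $M$ a $0/1$ permutation matrix, hence entrywise nonnegative; since each $\phi_{uv}$ is non-decreasing, $f(\mathcal{A}) = \sum_{u,v}\phi_{uv}(m(\mathcal{A})(u,v)) \leq \sum_{u,v}\phi_{uv}(m(\mathcal{B})(u,v)) = f(\mathcal{B})$.

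For submodularity I would verify the decreasing-marginal-value form. The structural fact used is that $m$ is \emph{modular}: $m(\mathcal{A}) + m(\mathcal{B}) = m(\mathcal{A}\cup\mathcal{B}) + m(\mathcal{A}\cap\mathcal{B})$ for all $\mathcal{A},\mathcal{B}$, which is immediate since each element's contribution $\alpha M$ is counted with the same multiplicity on both sides. Fix $\mathcal{A}\subseteq\mathcal{B}$ and an element $e=(\alpha,M)\notin\mathcal{B}$, write $c := \alpha M\geq 0$, and set $x := m(\mathcal{A})$, $y := m(\mathcal{B})$, so $0\leq x\leq y$ entrywise. Then
\[
f_{\mathcal{A}}(e) = \sum_{u,v}\Bigl(\phi_{uv}\bigl(x(u,v)+c(u,v)\bigr) - \phi_{uv}\bigl(x(u,v)\bigr)\Bigr),
\]
and similarly for $f_{\mathcal{B}}(e)$ with $y$ in place of $x$. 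Hence it suffices to show, separately for each entry, that $a\mapsto \phi_{uv}(a + c(u,v)) - \phi_{uv}(a)$ is non-increasing on $[0,\infty)$ — the classical statement that a concave function has non-increasing increments. Explicitly, with $t := T(u,v)$ and $\beta := c(u,v)\geq 0$, one checks $g(a) := \min\{a+\beta,t\} - \min\{a,t\} = \min\{\beta,\ \max\{t-a,0\}\}$, which is visibly non-increasing (piecewise linear and concave). Applying this entrywise with $a = x(u,v)\leq y(u,v)$ gives $f_{\mathcal{A}}(e)\geq f_{\mathcal{B}}(e)$, which is exactly the submodularity (decreasing marginal value) inequality.

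There is no deep obstacle; the content of the proof is the single observation that the truncation $\min\{\cdot,T\}$ acts coordinatewise and each scalar truncation is concave and non-decreasing, so that $f$ is \emph{concave-over-modular}. The only points requiring a little care are stating the additivity of $m$ correctly and handling the infinite ground set; both are routine.
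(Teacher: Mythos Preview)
Your proof is correct and follows essentially the same route as the paper's: both verify the decreasing-marginals inequality entrywise by computing $\min\{a+\beta,t\}-\min\{a,t\}$ explicitly (the paper derives it via the identity $\min(a+b,c)=\min(a,c)+\min(b,c-\min(a,c))$, obtaining $\min\{\beta,\,t-\min\{a,t\}\}$, which is exactly your $\min\{\beta,\max\{t-a,0\}\}$). Your ``concave-over-modular'' framing is a tidy packaging of the same computation.
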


We have established that  optical switch scheduling under the sum-throughput metric is a submodular maximization problem. With this, we are ready to  present a greedy algorithm that  achieves a sum-throughput of at least a constant factor of the optimal algorithm for {\em every} instance of the traffic matrix. 

\subsection{Algorithm} \label{sec: Algorithm}

\begin{algorithm}[!tbp]
    \SetKwInOut{Input}{Input}
    \SetKwInOut{Output}{Output}
    \Input{Traffic demand $T$, reconfiguration delay $\delta$ and scheduling window size $W$}
    \Output{Sequence of matchings and weights: $(\alpha_1,P_1),\ldots,(\alpha_k, P_k)$}
$\mathrm{sch} \leftarrow \{ \}$  \tcp*{schedule}
$k \leftarrow 0$ \;
$T_\mathrm{rem} \leftarrow T$ \tcp*{traffic remaining}
\While{$\sum_{i=1}^k (\alpha_i + \delta) \leq W$}
{
$k \leftarrow k + 1$\;
$(\alpha,M) \leftarrow \mathrm{argmax}_{M\in\mathcal{M},\alpha\in\mathbb{R}_+} \frac{ \|\min(\alpha M, T_\mathrm{rem})\|_1}{(\alpha + \delta)}  $ \;
$\mathrm{sch} \leftarrow \mathrm{sch} \cup \{(\alpha, M)\}$ \;
$T_\mathrm{rem} \leftarrow T_\mathrm{rem} - \min(\alpha M, T_\mathrm{rem})$ \;
}
\If{$\sum_{i=1}^k (\alpha_i + \delta) > W$}{$\mathrm{sch} \leftarrow \mathrm{sch} \backslash \{(\alpha, M)\}$\; }
    \caption{Eclipse: greedy direct routing algorithm} \label{alg: sha}
\end{algorithm}

Algorithm~\ref{alg: sha} -- Eclipse -- captures our proposed solution under direct routing. Eclipse takes the traffic matrix $T$, the time window $W$ and reconfiguration delay $\delta$ as inputs, and computes a sequence of matchings and durations as the output. The algorithm proceeds in rounds (the ``while loop"), where in each round a new matching is added to the existing sequence of matchings. The sequence terminates whenever the sum of the matching durations exceeds the allocated time window $W$ or whenever the traffic matrix $T$ is fully covered.

Consider any round $t$ in the algorithm; let $(\alpha_1,M_1),\ldots,$ $(\alpha_{t-1},M_{t-1})$ denote the  schedule computed so far in $t-1$ rounds (stored in variable $\mathtt{sch}$) and let $T_\mathrm{rem}(t)$ denote the amount of traffic yet to be routed. The matching that is selected in the $t$-th round is the one for which {\em utilization} is maximum. Utilization here refers to the percentage of the total matching capacity that is actually used. Mathematically, we choose an $(\alpha,M)$ pair such that $\|\min(\alpha M, T_\mathrm{rem})\|_1$ is maximized. In the supplementary material we have shown that that maxima occurs on the support of $T_\mathrm{rem}$. Hence this can be easily found by looking at the support of the (sparse) matrix $T_\mathrm{rem}$. We also propose a simple binary-search procedure, discussed in Algorithm~\ref{alg: bin search}, that finds a local maxima but performs extremely well in our evaluations (Section~\ref{sec:Evaluations}). 

\begin{algorithm}[!tbp]
    \SetKwInOut{Input}{Input}
    \SetKwInOut{Output}{Output}
    \Input{Traffic demand $T$, reconfiguration delay $\delta$}
    \Output{$(\alpha,M)\in\mathbb{R}_+\times\mathcal{M}$ such that $(\alpha,M) =  \mathrm{argmax}_{M\in\mathcal{M},\alpha\in\mathbb{R}_+} \frac{ \|\min(\alpha M, T)\|_1}{(\alpha + \delta)}$ }
$H \leftarrow$ distinct entries of $T$ sorted in ascending order\;
$i_\mathrm{lb} \leftarrow 1$ and $i_\mathrm{ub}\leftarrow$ length$(H)$\; 	
\While{$i_\mathrm{lb} < i_\mathrm{ub}$}
{
$i \leftarrow (i_\mathrm{lb} + i_\mathrm{ub})/2$ \;
$T_1 \leftarrow \min\{T, H(i) \}$ \tcp*{thresholding $T$ to $H(i)$}
$T_2 \leftarrow \min\{T, H(i+1) \}$ \;
$v_1 \leftarrow (\text{max. weight matching in }T_1)/(H(i)+\delta)$ \;
$v_2 \leftarrow (\text{max. weight matching in }T_2)/(H(i+1)+\delta)$ \;
\uIf{$v_1 < v_2$}{$i_\mathrm{lb}\leftarrow i$\;}
\uElseIf{$v_1 > v_2$}{$i_\mathrm{ub} \leftarrow i$\; }
\Else{return $(H(i),$ max. weight matching in $T_1)$\;}
}
\caption{Finding the greedy maximum} \label{alg: bin search}
\end{algorithm}

As a concluding remark, we note that the constant $\epsilon$ in the approximation factor
comes from the 
requirement that $\alpha + \delta \leq \epsilon W$ hold. We observe that
this mild technical condition, required to show that Eclipse is a constant factor approximation of the optimal algorithm, has an added implication. Informally, it ensures that 
no single matching occupies  the bulk of the scheduling window. This process of selecting a matching is repeated in each round until the sum-duration of the matchings exceed the scheduling window $W$, when the last chosen matching is discarded and the remaining set of matchings are returned. Eclipse is simple
and also fast, a fact the following calculation demonstrates. 

\subsubsection{Complexity}

We begin with the complexity of Algorithm~\ref{alg: bin search}. Since $i_\mathrm{ub}$ is  no more  than the number of distinct entries of $T$, we have $i_\mathrm{ub}\leq n^2$. In each iteration, the algorithm only considers entries of $H$ that have indices between $i_\mathrm{lb}$ and $i_\mathrm{ub}$. However, binary-search halves the effective size of $H$ (i.e., those numbers in $H$ with array indices $i_\mathrm{lb},i_{\mathrm{lb}+1},\ldots,i_\mathrm{ub}$), and the number of iterations of the \texttt{while} loop is bounded by $\log n^2 = 2\log n$. Within the \texttt{while} loop, computing the maximum weight matching can be done in $O(dn^{3/2}\log(W\epsilon))$ time (a basic fact of submodular optimization~\cite{Duan:2012:SAM:2095116.2095227, schrijver-book}) where $dn$ is the number of edges in bipartite graph formed by $T$ (i.e., $d$ is the average sparsity) . Further $(1-\epsilon)$ approximate maximum weight matching can be computed in $O(dn\epsilon^{-1}\log\epsilon^{-1})$~\cite{Duan:2014:LAM:2578041.2529989}, and
efficient implementations in practice have been studied extensively in the literature 
\cite{mekkittikul1998practical,pettie2004simpler,felzenszwalb2011dynamic}. Hence the overall time complexity is $O(dn^{3/2}\log n \log(W\epsilon))$. Now, in Algorithm~\ref{alg: sha} the number of iterations in the \texttt{while} loop is bounded by $W/\delta$. As such the total complexity of the algorithm is $\tilde{O}(dn^{3/2} \frac{W}{\delta} )$. An exact search over the support of $T_\mathrm{rem}$ in the maximization step results in a overall complexity of $\tilde{O}(d^2n^{5/2} \frac{W}{\delta} )$. 

%



\subsubsection{Approximation Guarantee}

Since the  proposed direct routing algorithm is connected to 
 submodular maximization with   linear constraints, we can adapt standard combinatorial optimization techniques to show an approximation factor of $1-1/e$. Let \texttt{OPT} denote the sum-throughput of the optimal algorithm for given inputs $T,\delta$ and $W$. Let \texttt{ALG2} denote the sum-throughput achieved by Eclipse. We then have the following.
\begin{thm} \label{thm: submod apx guarantee}
If the entries of $T$ are bounded by $\epsilon W +\delta$ then Eclipse approximates the optimal algorithm to within a factor of $1-1/e^{(1-\epsilon)}$, i.e.,
\begin{align}
\mathtt{ALG2} \geq (1-1/e^{(1-\epsilon)})\mathtt{OPT}.
\end{align}
\end{thm}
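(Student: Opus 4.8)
The plan is to recognize Eclipse as the classical greedy \emph{density} (marginal\-value\-to\-cost) algorithm for maximizing the monotone, normalized submodular function $f$ of Theorem~\ref{thm: single hop submod} under a single knapsack constraint, and then to run the standard telescoping analysis while paying attention to the termination rule. View each round's choice $(\alpha,M)$ as a ground\-set element with cost $c(\alpha,M):=\alpha+\delta$, so that the constraint $\sum_i(\alpha_i+\delta)\le W$ is exactly a knapsack of capacity $W$. First I would check the bookkeeping in Algorithm~\ref{alg: sha}: if $S_t=\{(\alpha_1,M_1),\dots,(\alpha_t,M_t)\}$ is the schedule after $t$ rounds and $T_{\mathrm{rem}}(t)=T-\min(\sum_{i\le t}\alpha_i M_i,\,T)$ is the stored residual (entrywise), then a short entrywise case analysis shows $\|\min(\alpha M,\,T_{\mathrm{rem}}(t))\|_1=f(S_t\cup\{(\alpha,M)\})-f(S_t)=f_{S_t}((\alpha,M))$. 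Hence the maximization inside the \texttt{while} loop is precisely $\arg\max_{e}\, f_{S_t}(e)/c(e)$, and (as shown in the supplement) this maximizer exists because it occurs on the finite support of $T_{\mathrm{rem}}(t)$.

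The heart of the argument is the standard greedy density bound. Let $O$ be an optimal schedule for \eqref{eq:opt problem}, so $f(O)=\mathtt{OPT}$ and $c(O):=\sum_{e\in O}c(e)\le W$; note $O$ is finite. Monotonicity gives $f(S_t\cup O)\ge f(O)$, and the submodularity of $f$ — extended from $2^{\mathbb{Z}\times\mathcal{M}}$ to $2^{\mathbb{R}_+\times\mathcal{M}}$, which the proof of Theorem~\ref{thm: single hop submod} supports verbatim since the capped\-sum structure of \eqref{eq:func defn} does not use integrality — gives $f(S_t\cup O)-f(S_t)\le\sum_{e\in O}f_{S_t}(e)$. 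Writing $e_{t+1}$ for Eclipse's choice in round $t+1$ and using its greedy optimality,
\begin{align}
\mathtt{OPT}-f(S_t)\ \le\ \sum_{e\in O}f_{S_t}(e)\ \le\ \Big(\max_{e}\frac{f_{S_t}(e)}{c(e)}\Big)\sum_{e\in O}c(e)\ \le\ \frac{f_{S_t}(e_{t+1})}{c(e_{t+1})}\,W .
\end{align}
Rearranging and using $f(S_{t+1})-f(S_t)=f_{S_t}(e_{t+1})$ yields $\mathtt{OPT}-f(S_{t+1})\le(1-c(e_{t+1})/W)(\mathtt{OPT}-f(S_t))$. Telescoping from $f(S_0)=f(\{\})=0$ and using $1-x\le e^{-x}$ gives, for every $t$,
\begin{align}
\mathtt{OPT}-f(S_t)\ \le\ \mathtt{OPT}\cdot\exp\!\Big(-\tfrac{1}{W}\textstyle\sum_{i\le t}c(e_i)\Big)\ =\ \mathtt{OPT}\cdot e^{-C_t/W},\qquad C_t:=\textstyle\sum_{i\le t}c(e_i).
\end{align}

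It remains to control how much of the budget the output schedule consumes. If the \texttt{while} loop halts because $T$ is fully covered, then $f(S)=\|T\|_1\ge\mathtt{OPT}$ and the claim is immediate. Otherwise the output is $S_{k-1}$, where $S_k$ is the first partial schedule with $C_k>W$ and the last pair $(\alpha_k,M_k)$ is discarded; since the iteration producing $e_k$ was entered with $C_{k-1}\le W$, we get $C_{k-1}>W-c(e_k)=W-(\alpha_k+\delta)$. The density ratio $\|\min(\alpha M,T_{\mathrm{rem}})\|_1/(\alpha+\delta)$ is strictly decreasing in $\alpha$ once $\alpha$ exceeds the largest entry of $T_{\mathrm{rem}}$ on the chosen matching, so the greedy $\alpha_k$ never exceeds $\max T_{\mathrm{rem}}\le\max T$; together with the hypothesis on the entries of $T$, which forces every selected pair to satisfy $\alpha_i+\delta\le\epsilon W$ (the role of that bound, as noted after Algorithm~\ref{alg: sha}), this gives $c(e_k)\le\epsilon W$ and hence $C_{k-1}>(1-\epsilon)W$. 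Plugging into the telescoped bound,
\begin{align}
\mathtt{ALG2}=f(S_{k-1})\ \ge\ \mathtt{OPT}\big(1-e^{-C_{k-1}/W}\big)\ \ge\ \mathtt{OPT}\big(1-e^{-(1-\epsilon)}\big)=\big(1-1/e^{(1-\epsilon)}\big)\,\mathtt{OPT},
\end{align}
as claimed. The step I expect to be most delicate is this last one: quantifying the budget actually spent by the output schedule, which is exactly where the ``no matching is too long'' hypothesis is used and where the exponent $1-\epsilon$ (rather than the textbook $1$) appears; the preceding density/telescoping part is the standard knapsack\-greedy analysis transplanted to the continuum of admissible durations.
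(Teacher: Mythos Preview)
Your proposal is correct and follows essentially the same approach as the paper's proof: both cast Eclipse as the density-greedy algorithm for monotone submodular maximization under a single knapsack constraint, derive the inequality $\mathtt{OPT}-f(S_t)\le \frac{W}{c(e_{t+1})}f_{S_t}(e_{t+1})$ via monotonicity and submodularity, telescope with $1-x\le e^{-x}$, and then use the entry bound on $T$ to ensure the discarded last matching costs at most $\epsilon W$ so that $C_{k-1}\ge (1-\epsilon)W$. Your write-up is in fact slightly more careful than the paper's in two places---you explicitly justify why the greedy $\alpha_k$ never exceeds $\max T_{\mathrm{rem}}$ (the paper simply asserts ``no matching has a duration longer than $\epsilon W$''), and you separately dispose of the trivial case where $T$ is fully covered before the budget is exhausted.
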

The proof of the above Theorem is deferred to the Appendix.

%
%
%
%

\section{Indirect Routing} \label{sec: Indirect Routing}

In the previous section, we focused on direct routing where packets are forwarded to their destination ports only if a link {\em directly} connecting the source port to the destination port appeared in the schedule -- this is essentially a ``single-hop'' protocol. In this section, we explore  allowing packets to be forwarded to (potentially) multiple intermediate ports before arriving at its final destination.  In terms of implementing this more-involved protocol, we note that there is no extra overhead needed: the destination of any received packet is read first upon reception and since the queues are maintained on a  per-destination basis at each ToR port, any received packet can be diverted to the appropriate queue. The key point of allowing indirect routing is the vastly increased range of ports that can be reached from a small number of matchings. 

Consider Fig.~\ref{fig:multiflow} which illustrates a $6$-port network and a sequence of $3$ consecutive matchings in the schedule. In the first matching port 3 is connected to port 2; in the second matching it is connected to port 5 and so on. With direct routing, port 3 can only forward packets to port 2 in round 1, port 5 in round 2 and port 1 in round 3. In other words the set of egress ports {\em reachable} by port 3 is $\{2,3,5\}$. In the indirect routing framework of this section, port 3 can also forward packets to port 1. This can be achieved by first forwarding the packets to port 2 in the first round where the packets are queued. Then in the second round we let port 2 forward those packets to the destination port 3. Thus the {\em reachability} of the nodes is enhanced by allowing for indirect routing. Indirect routing can also be viewed as ``multi-hop'' routing.    

Traditionally multi-hop routing has been used as a means of {\em load balancing}. This is known to be true in the context of networks  such as the Internet where the benefits of ``Valiant load-balancing" are legion \cite{rabin1989efficient,valiant1990bridging,greenberg2008towards}. The benefits of load balancing are also well known in the switching context -- a classic example is the two-stage load-balancing algorithm in crossbar switches without reconfiguration delay~\cite{Srikant:2014:CNO:2636796}. The benefit of multi-hop routing in our context is {\em markedly different}: the reachability benefits of indirect routing are especially well suited to the setting where input ports are directly connected to only a few output ports due to the {\em small number} of matchings in the scheduling window. In fact, an elementary calculations shows that over a period of $k$ matchings in the schedule, indirect routing can allow a node to forward packets to $O(2^k)$ other nodes, compared to only $O(k)$ nodes possible with direct routing. This is because of the recursion $f(k) = 2f(k-1) + 1$ where $f(k)$ denotes the number of nodes reachable by any node in $k$ rounds. If a node (say, node 1) can reach $f(k-1)$ nodes in $k-1$ rounds, then in the $k$th round (i) there is a new node directly connected to node 1 and (ii) each of the $f(k-1)$ nodes can be connected to a new node. Thus the number of nodes connected to node 1 in the $k$ -th round becomes $f(k-1) + (f(k-1) + 1)$. Fig.~\ref{fig:multiflow} illustrates this phenomenon where reachability from node 6 is shown. 
\begin{figure}[!tbp]
  \centerline{\includegraphics[height=50mm]{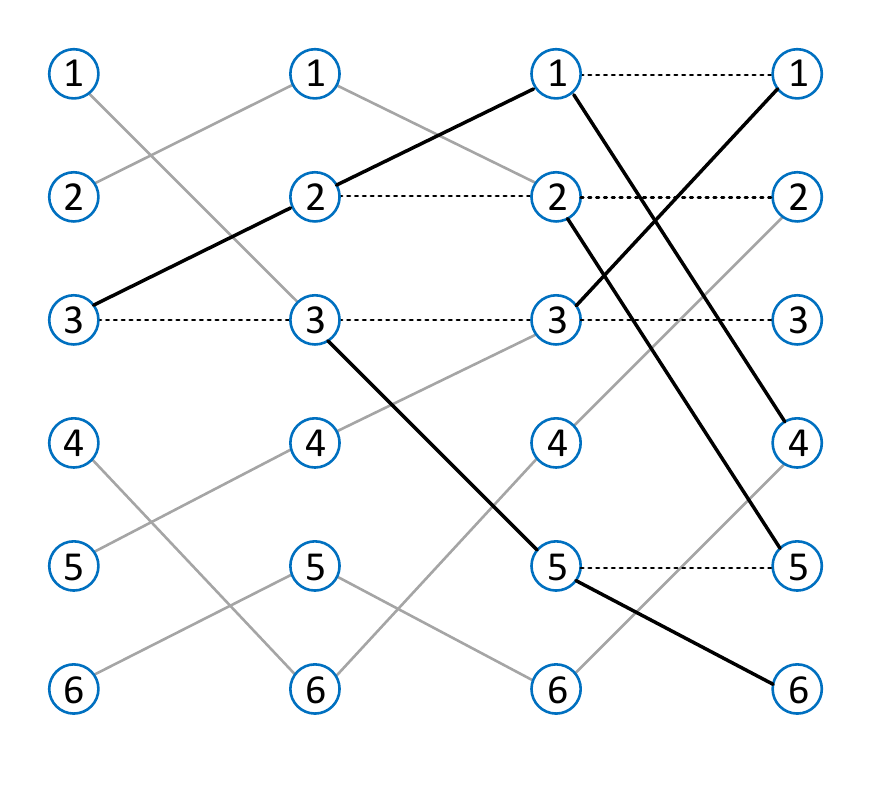}
  \label{fig:multiflow}}
    \caption{Reachability of nodes under multi-hop routing.}
    \label{fig:multiflow}
\end{figure}
As a corollary we observe that $O(\log_2 n)$ rounds of matchings are sufficient to reach {\em all} other nodes in a $n$-port network. 

As in the direct routing case, computing the optimal schedule remains a challenging problem. While it is clear that we can achieve a performance at least as good as with direct routing, the gain is different for each instance of the traffic matrix -- precisely quantifying the gain in an instance-specific way appears to be challenging.  Our main result is that the submodularity property of the objective function continues to hold, provided the variables are considered in the appropriate format. Further, if we restrict some of the variables (the matchings and their durations), then there is also a natural, simple and fast greedy algorithm to compute the switching schedules and routing policies that is  approximately optimal for {\em each} instance of the traffic matrix. This algorithm serves as a heuristic solution to the more general problem of jointly finding number of switchings, their durations, the switching schedules and routing policies. We present these results, following the same format as in the direct routing section, leaving numerical evaluations to a later section. We follow the model as discussed in Section~\ref{sec: System Model}. 

%

\subsection{Submodularity of objective function}
We first adopt an alternative way of describing the switch schedule. Instead of specifying switch configurations round by round, and then specifying a multi-hop routing strategy in each round, we directly specify the {\em path} taken by each packet. The path-based formulation of multi-hop routing is well known in the problems of computing maximal flows in capacitated graphs and is crucial to understanding flow-cut gaps \cite{williamson2011design}; such a formulation serves us well in the {\em causal} structure of the routed traffic patterns that naturally occur here. 

For simplicity we fix the number of rounds $k$ in the schedule. Consider a fully connected $k$-round time layered graph $G$: this graph consists of $k+1$ partites, $V_0,V_1,\ldots,V_k$, of $n$ nodes each. Nodes in each partite $i$ have directed edges to partite $i+1$ (such that the two partites form a complete bipartite graph).  Let $\mathcal{P}$ denote the set of all paths in $G$ that begin at a node in $V_0$ and end at a node in $V_k$. Now, if we want to describe a multi-hop route for a packet in the system we can do it by choosing a $p\in\mathcal{P}$. If we are able to choose a path for every packet in the traffic matrix $T$, such that the union of the paths obey capacity constraints, then we would have succesfully specified a sequence of switch configurations and a routing policy for the schedule. Now, consider a function $f:2^{\mathbb{Z}\times\mathcal{P}}\rightarrow \mathbb{Z}$ defined by $f(\{(\beta_1,p_1),\ldots,(\beta_m,p_m)\})  \triangleq$
\begin{align}
\sum_{i,j\in[n]}\min\left(\sum_{l=1}^m \beta_l\mathbf{1}_{\substack{p_l(0)=i, \\ p_l(k+1)=j}} , T_{ij}\right) \label{eq: multhop submod}
\end{align} 
where $p(0)$ and $p(k+1)$ denote the starting and ending nodes of path $p$, and $\mathbf{1}_{\{\cdot\}}$ is the indicator function. Then the key observation is that $f$ is submodular.  
\begin{thm}
The function $f:2^{\mathbb{Z}\times\mathcal{P}}\rightarrow \mathbb{Z}$ defined by Equation~\eqref{eq: multhop submod} is submodular. 
\end{thm}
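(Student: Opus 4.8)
The plan is to follow the same route used for Theorem~\ref{thm: single hop submod}: exhibit $f$ as a finite sum of functions that are each visibly monotone submodular, and then use that submodularity (as well as monotonicity and normalization) is preserved under finite summation.

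Concretely, I would first separate the objective by source–destination pair. For a finite set $S=\{(\beta_1,p_1),\dots,(\beta_m,p_m)\}\subseteq\mathbb{Z}_{\ge 0}\times\mathcal{P}$ and each $(i,j)\in[n]^2$, put
\[
L_{ij}(S)\;=\;\sum_{l=1}^m \beta_l\,\mathbf{1}_{\{p_l(0)=i,\ p_l(k+1)=j\}},
\]
so that $f(S)=\sum_{i,j\in[n]}\min\!\bigl(L_{ij}(S),T_{ij}\bigr)$. The key structural observation is that each $L_{ij}$ is a \emph{modular} (indeed linear and nonnegative) set function on the ground set $\mathbb{Z}_{\ge 0}\times\mathcal{P}$: every ground-set element $(\beta,p)$ carries the fixed weight $\beta\,\mathbf{1}_{\{p(0)=i,\,p(k+1)=j\}}\ge 0$, and $L_{ij}$ simply adds up the weights of the selected elements. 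This is where the path formulation behaves nicely for the objective: since $f$ depends on a path only through its two endpoints, two elements $(\beta,p)$ and $(\beta',p)$ (or two paths with the same endpoints but different internal edges) are genuinely distinct ground-set elements whose contributions add without interaction — internal capacity constraints play no role here, entering only later as constraints on feasible sets, not in $f$ itself.

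Next I would invoke the standard composition lemma: if $m(\cdot)$ is a nonnegative modular set function and $\phi:\mathbb{R}_{\ge 0}\to\mathbb{R}$ is nondecreasing and concave, then $S\mapsto\phi(m(S))$ is monotone submodular, since the marginal gain of adding an element of weight $w\ge 0$ to $S$ equals $\phi(m(S)+w)-\phi(m(S))$, which is nonnegative by monotonicity of $\phi$ and nonincreasing in $m(S)$ by concavity of $\phi$, hence nonincreasing as $S$ grows by monotonicity of $m$. Applying this with $\phi(t)=\min(t,T_{ij})$ — nondecreasing and concave as a minimum of two affine functions — shows each $S\mapsto\min(L_{ij}(S),T_{ij})$ is monotone submodular; summing over the $n^2$ pairs gives that $f$ is monotone submodular, and it is normalized since $f(\emptyset)=\sum_{i,j}\min(0,T_{ij})=0$. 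An equivalent way to see this, making the parallel with the direct-routing case transparent, is to note that each $(\beta,p)$ acts on the objective exactly through the rank-one matrix $\beta\,e_{p(0)}e_{p(k+1)}^{\top}$, so $f(S)=\bigl\|\min(\sum_l\beta_l\,e_{p_l(0)}e_{p_l(k+1)}^{\top},\,T)\bigr\|_1$ — literally Equation~\eqref{eq:func defn} with permutation matrices replaced by rank-one $0$–$1$ matrices — whence the argument behind Theorem~\ref{thm: single hop submod} carries over unchanged, as it uses only nonnegativity of the matrices involved.

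I do not expect a substantive obstacle; the work is almost entirely bookkeeping. The two points needing a little care are (i) confirming that ``modular'' is the right claim on the infinite ground set $\mathbb{Z}_{\ge 0}\times\mathcal{P}$ with repeated paths and durations permitted — which holds because distinct pairs contribute additively — and (ii) the fact that the concave-composition step requires the durations $\beta$ to be nonnegative; since $\beta$ is a time duration this is exactly the regime of interest and costs us nothing. With these noted, the proof is short.
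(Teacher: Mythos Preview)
Your proposal is correct and matches the paper's intended approach: the paper omits the proof entirely, stating only that it ``is similar to Theorem~\ref{thm: single hop submod},'' and your second route (rewriting $f(S)=\|\min(\sum_l\beta_l\,e_{p_l(0)}e_{p_l(k+1)}^{\top},T)\|_1$ and invoking the direct-routing argument verbatim) is exactly that. Your first route via the concave-of-nonnegative-modular composition lemma is a clean equivalent that makes the structure more transparent, but is not a genuinely different idea.
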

The proof is similar to Theorem~\ref{thm: single hop submod} and we omit it in the interests of space. So far we have not imposed any restrictions on the set of paths that we choose for the schedule. This can be incorporated in the form of constraints to the problem. Thus we are able to rephrase the objective as a constrained submodular maximization problem. 

{\bf Constraints}: 
Since we are only choosing weighted paths, we need to ensure that (i) the set of paths form a matching in each round and (ii) the total durations of the matching is at most $W - k\delta$. As such, we can write the following constraints for any subset $\{(\beta_1,p_1),\ldots,(\beta_m,p_m)\}\in 2^{\mathbb{Z}\times \mathcal{P}}$ :   
\begin{align}
\sum_{\substack{e: v\in e, \\ e\in E_j}}\mathbf{1}\left\{ \sum_{i=1}^m\mathbf{1}_{\{e\in p_i\}}\beta_i > 0\right\} \leq 1 ~\forall v\in V_{j-1}, j\in[k] \label{eq: submod const 1}\\
\sum_{\substack{e: v\in e, \\ e\in E_j}}\mathbf{1}\left\{ \sum_{i=1}^m\mathbf{1}_{\{e\in p_i\}}\beta_i > 0\right\} \leq 1 ~\forall v\in V_{j}, j\in[k] \label{eq: submod const 2} \\
\sum_{j=1}^k \left(\left( \max_{e\in E_j} \sum_{i=1}^m \mathbf{1}_{\{e\in p_i\}} \beta_i  \right)+ \delta\right) \leq W  \label{eq: submod const 3}
\end{align}
where $E_j$ stands for the edges between $V_{j-1}$ and $V_j$ in $G$. Hence we can express the problem as maximization of objective~\eqref{eq: multhop submod} subject to the constraints~\eqref{eq: submod const 1}--~\eqref{eq: submod const 3}.

The key challenge here is that  the constraints are {\em nonlinear} --  it is not clear whether an efficient (approximation) algorithm exists. The nonlinearities appear only in the sense of membership tests and a corresponding thresholding function -- so it is possible that an efficient nearly-optimal greedy algorithm exists; such a study is outside the scope of this manuscript.  We do note, however, that for the special case in which the configurations are fixed and we only have to decide on the indirect routing policies, the constraints take on a linear form -- in this setting, we are able to construct fast and efficient greedy algorithms. This case represents a composition of  direct routing (where switch schedules are computed) and indirect routing (where the multi-hop routing policies are described), and is discussed next. 

{\bf Multi-Hop Routing Policies}: 
Consider a fixed sequence $(\alpha_1,M_1),\ldots,(\alpha_k,M_k)$ of switch configurations and an input traffic demand matrix $T$. Let $G$ denote the time-layered graph obtained from the sequence of matchings, i.e., $G$ consists of $k+1$ partites $V_0,\ldots,V_k$ with $n$ nodes each, and $M_i$ is the matching between partites $V_{i-1}$ and $V_i$ with edge weight $\alpha_i$ on the matching edges. In addition to the matching edges, there are also edges, with infinite edge weights, connecting the $j$-th nodes of $V_{i-1}$ and $V_i$ for all $j\in[n],i\in[k]$. In other words, $G$ is a time-layered graph whose edges are weighed according to the capacity available on the edges. Let $R(e)$ denote the capacity (= edge-weight) of edge $e\in G$. In this setting,  the capacity constraints on the end-to-end paths are the sole constraints left in the optimization problem: in other words, the constraints in Equations~\ref{eq: submod const 1} and~\ref{eq: submod const 2} simplify to the following -- we  consider subsets $\{(\beta_1,p_1),\ldots,(\beta_m,p_m)\}$ that obey: 
\begin{align}
\sum_{i=1}^m \beta_i\mathbf{1}_{\{e\in p_i\}} \leq R_j(e) ~~\forall e\in R_j, \forall j\in[k] \label{eq: submod lin constraints}
\end{align}  
Notice that the constraints above have a linear form, and there are a total of $kn$ such constraints (one for each edge). Such a setting allows for a natural, simple, fast and nearly-optimal algorithm which we discuss below. Prior to that discussion, we remark that a naive approach to resolve the setting here is to formulate a {\em linear program}  that maximizes the required objective. Indeed linear programming based approaches was the predominant technique used to solve this classical {\em multicommodity flow} problem~\cite{ahuja1993network, grötschel1993geometric, barnhart1993network}. However, despite many years of research in this direction the proposed algorithms were often too slow even for moderate sized instances~\cite{leighton1995fast}. Since then there has been a renewed effort in providing efficient {\em approximate} solutions to the multicommodity flow problem~\cite{garg2007faster, arora2012multiplicative}. The algorithm we propose is also a step in this direction, favoring efficiency over exactness of the solution. Note that in the path-formulation of the linear program there can be an exponential (in $k$) number of variables (for example, a schedule where node 1 is connected to node 2 and vice-versa in all the $k$ matchings has an exponential number of indirect paths from node 1 to 2).  Thus there is no obvious efficient (exact) solution to this LP. On the other hand, the formulation we work with  is able to handle this issue by appropriately {\em weighing the edges} and {\em picking the path} with the smallest weight; this latter step  can be done efficiently -- for example, using Dijkstra's algorithm. 


\subsection{Algorithm}

We propose an algorithm Eclipse++ directly motivated from~\cite{azar2012efficient}, which presents a fast and efficient multiplicative weights algorithm for submodular maximization under linear constraints. The pseudocode has been given in Algorithm~\ref{alg: mha}. The structure of Algorithm~\ref{alg: mha} is  similar in spirit to direct-routing Algorithm~\ref{alg: sha} in the sense that (a) the algorithm proceeds in rounds, where one new path is added to the schedule in each round and (b) we select a path that offers the greatest utility per unit of cost incurred. However, unlike Algorithm~\ref{alg: sha} where there was only one linear constraint, we have multiple linear constraints now. This is addressed by assigning weights to the constraints and considering a {\em linear combination} of the costs as the true cost in each round. We describe the salient features of Algorithm~\ref{alg: mha} below. 
\begin{algorithm}[!tbp]
    \SetKwInOut{Input}{Input}
    \SetKwInOut{Output}{Output}
    \Input{Traffic demand $T$, switch configurations with residue capacities $R_1,\ldots,R_k$, update factor $\lambda$}
    \Output{Sequence of paths and weights: $(\beta_1,p_1),\ldots,(\beta_m, p_m)$}
$\mathrm{sch} \leftarrow \{ \}$  \tcp*{schedule} 

$T_\mathrm{rem} \leftarrow T$ \tcp*{traffic remaining} 
$w_e \leftarrow 1/R(e)$ for all $e\in E$\;
$m\leftarrow 1$\;
\While{$\sum_{e\in E} R(e) w_e\leq \lambda$ and $\|T_\mathrm{rem}\|_1 > 0$}
{
$(\beta_m,p_m) \leftarrow \mathrm{argmax}_{p\in\mathcal{P},\beta\in\mathbb{Z}} \frac{ \min(\beta, T_\mathrm{rem}(p(0,p(k+1))}{\sum_{e\in E}\beta \mathbf{1}_{\{e\in p\}}w_e}  $ \; 
$\mathrm{sch} \leftarrow \mathrm{sch} \cup \{(\beta_m, p_m)\}$ \;
$T_\mathrm{rem}(p_m(0),p_m(k+1)) \leftarrow T_\mathrm{rem}(p_m(0),p_m(k+1)) -\beta$ \;
$w_e \leftarrow  w_e \lambda^{\beta_m \mathbf{1}_{\{e\in p\}}/R(e)}~~\forall e\in G$ \;
$m\leftarrow m+1$\;
}
\uIf{$\sum_{i=1}^{m-1} \beta_i\mathbf{1}_{\{e\in p_i\}} \leq R(e) ~~\forall e\in E$}{return sch}
\Else{return sch$\backslash (\beta_{m-1},p_{m-1})$}
    \caption{Eclipse++ : greedy indirect routing algorithm} \label{alg: mha}
\end{algorithm}

We begin by recalling that we have one constraint for each edge in the matchings for a total of $kn$ constraints. Let $w_e$ denote the weight assigned to the constraint involving edge $e$ and let $R(e)$ denote the capacity available with edge $e$. We set $w_e = 1/R(e)$ for all $e$ initially, i.e., edges with a large capacity are assigned a small weight and vice-versa. Thus in addition to the time-layered capacity graph $G$, we can now have another graph $G_w$ (with same topology as $G$) whose edges are weighted by $w_e$. Now, for any path $p$ the ``effective cost'' of the path per unit of flow is simply the total cost of $p$ in $G_w$. Thus for the path $(\beta,p)$ carrying $\beta$ units of flow, the effective cost is given by $\sum_{e\in E}\beta w_e \mathbf{1}_{e\in p}$. On the other hand, the benefit we get due to adding path $(\beta, p)$ is given by $\min (\beta, T(p(0),p(k+1)))$ where $p(0)$ and $p(k+1)$ stand for the starting and terminating nodes along path $p$. Thus, the ratio $\frac{\min (\beta, T(p(0),p(k+1)))}{\sum_{e\in E}\beta w_e \mathbf{1}_{e\in p}}$ denotes the benefit of path $p$ per unit cost incurred. In Algorithm~\ref{alg: mha} we select $p$ such that the utility per unit cost is maximized.   

Now, once we have selected a path $(\beta_1,p_1)$ in the first round, we update the weights $w_e$ on the edges. To do this we let a parameter $\lambda$ be input to the algorithm. Then, for each edge $e$ along the path $p$ the weights are updated as $w_e \leftarrow w_e \lambda^{\beta_1/R(e)}$; for the remaining edges the weights remain unchanged. Thus repeating the above iteratively until the \texttt{while} loop condition $\sum_{e\in E} R(e) w_e \leq \lambda$ becomes invalid, we get a schedule that is the output of the algorithm. It can also be shown that if the schedule returned \texttt{sch} violates any of the constraints (Equation~\eqref{eq: submod lin constraints}) then it must have happened at the very last iteration and hence we return a schedule with the last added path removed from it.  A detailed analysis and correctness of the algorithm proposed is deferred to a full version of the paper and is omitted in this conference submission. 

It only remains to show how the maximizer of 
\begin{align}
\frac{ \min(\beta, T_\mathrm{rem}(p(0,p(k+1))}{\sum_{e\in E}\beta \mathbf{1}_{\{e\in p\}}w_e} \label{eq: submod maximizer}
\end{align} is computed efficiently in each round (first line inside the \texttt{while} loop). Recall that $G_w$ denotes the time-layered graph $G$ with edges weighted by $w_e$. Consider the set of shortest paths in $G_w$ (= smallest $w_e$-weighted path) from  vertices in $V_0$ to vertices in $V_k$. Let $p^*$ denote the shortest among them. Then by setting $\beta^* \leftarrow T_\mathrm{rem}(p^*(0),p^*(k+1))$ we claim that Equation~\eqref{eq: submod maximizer} is maximized. This is because, 
\begin{align}
\frac{ \min(\beta, T_\mathrm{rem}(p(0,p(k+1))}{\sum_{e\in E}\beta \mathbf{1}_{\{e\in p\}}w_e} \leq \frac{ \beta }{\sum_{e\in E}\beta \mathbf{1}_{\{e\in p\}}w_e} \\
\leq \frac{1}{\min \sum_{e\in E} \mathbf{1}_{\{e\in p\}}w_e}. 
\end{align}
If $T_\mathrm{rem}(p^*(0),p^*(k+1)) = 0$ we proceed to the second smallest shortest path and so on. This allows a very efficient implementation of the internal maximization step. 

\subsubsection{Approximation Guarantee}

We show, as in the direct-routing scenario, that Eclipse++ has a constant factor approximation guarantee. Specifically, for a fixed instance of the traffic matrix,  let \texttt{OPT} and \texttt{ALG4} denote the value of the objectives achieved by the optimal algorithm and Eclipse++, respectively. Let $\eta := \max_{i,j\in[n],e\in E} T(i,j) /R(e)$. Then one can show that $\mathtt{ALG4} = \Omega(1/(nk)^\eta) \mathtt{OPT}$ for $\lambda=e^{1/\eta}nk$; the proof is analogous to the direct-routing case, it follows \cite[Theorem 1.1]{azar2012efficient},  and is deferred to a  full version of this manuscript. Further, if $\eta = O(\epsilon^2/\log (nk))$ for some fixed $\epsilon > 0$ then we get a approximation ratio of $(1-\epsilon)(1-1/e)$ by letting $\lambda = e^{\epsilon/(4\eta)}$ (this observation is inspired by \cite[Theorem 1.2]{azar2012efficient}). An interesting regime where this setting occurs is when the traffic matrices are dense with small skew. For example, we get a constant factor approximation if the sparsity of the traffic matrix grows logarithmically fast (or faster). This is in stark contrast to direct routing, where sparse matrices generally perform better.  

\subsubsection{Complexity}
The proposed algorithm is simple and fast. In this subsection, we explicitly enumerate the time complexity of the full algorithm and show that the complexity is at most cubic in $n$ and nearly linear in $k$.  Let $W$ denote a bound on the total incoming or outgoing traffic for a node. In each iteration of the \texttt{while} loop at least one unit of traffic is sent. Therefore there are at most $W$ iterations of the \texttt{while} loop. Now, in each iteration finding the shortest paths between nodes in $V_0$ to nodes in $V_k$ takes $kn^2(\log k + \log n)$ operations using Dijkstra's algorithm~\cite{fredman1987fibonacci}. Sorting the computed distances takes $kn^2(\log k + \log n)^2$ time and at most $n^2$ more operations to find a pair $i,j$ such that $T_\mathrm{rem}(i,j)>0$. Finally the weights update step takes $kn$ time. Therefore overall it takes $O(kn^2(\log k + \log n)^2)$ time per iteration. Hence the time complexity of the complete algorithm is $O(Wkn^3(\log k + \log n)^2)$.

\section{Evaluation} \label{sec:Evaluations}

\begin{figure*}[t]
  \centerline{\subfigure[]{\includegraphics[height=50mm]{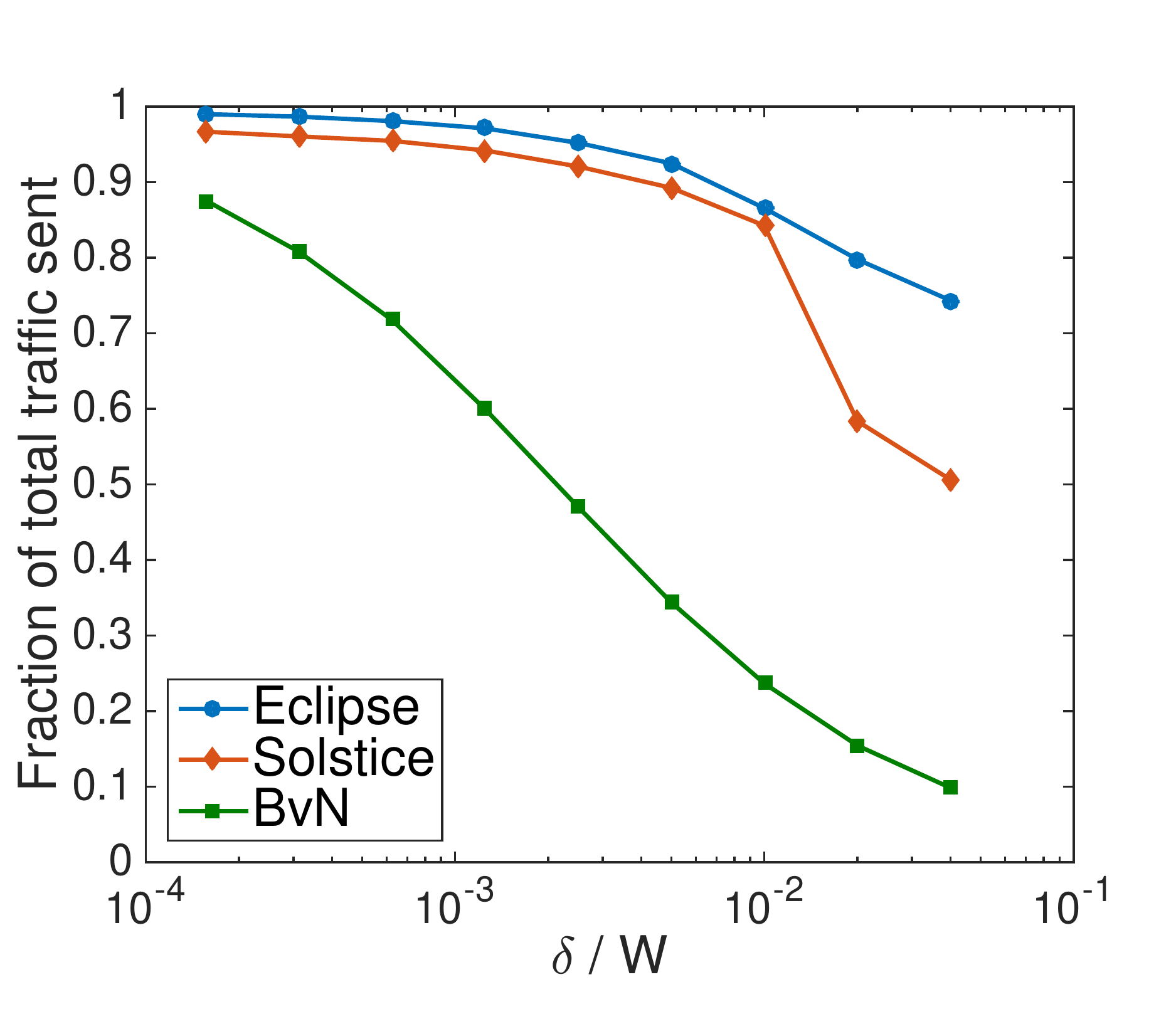}
     \label{fig:plot1}}
\hfil
     \subfigure[]{\includegraphics[height=50mm]{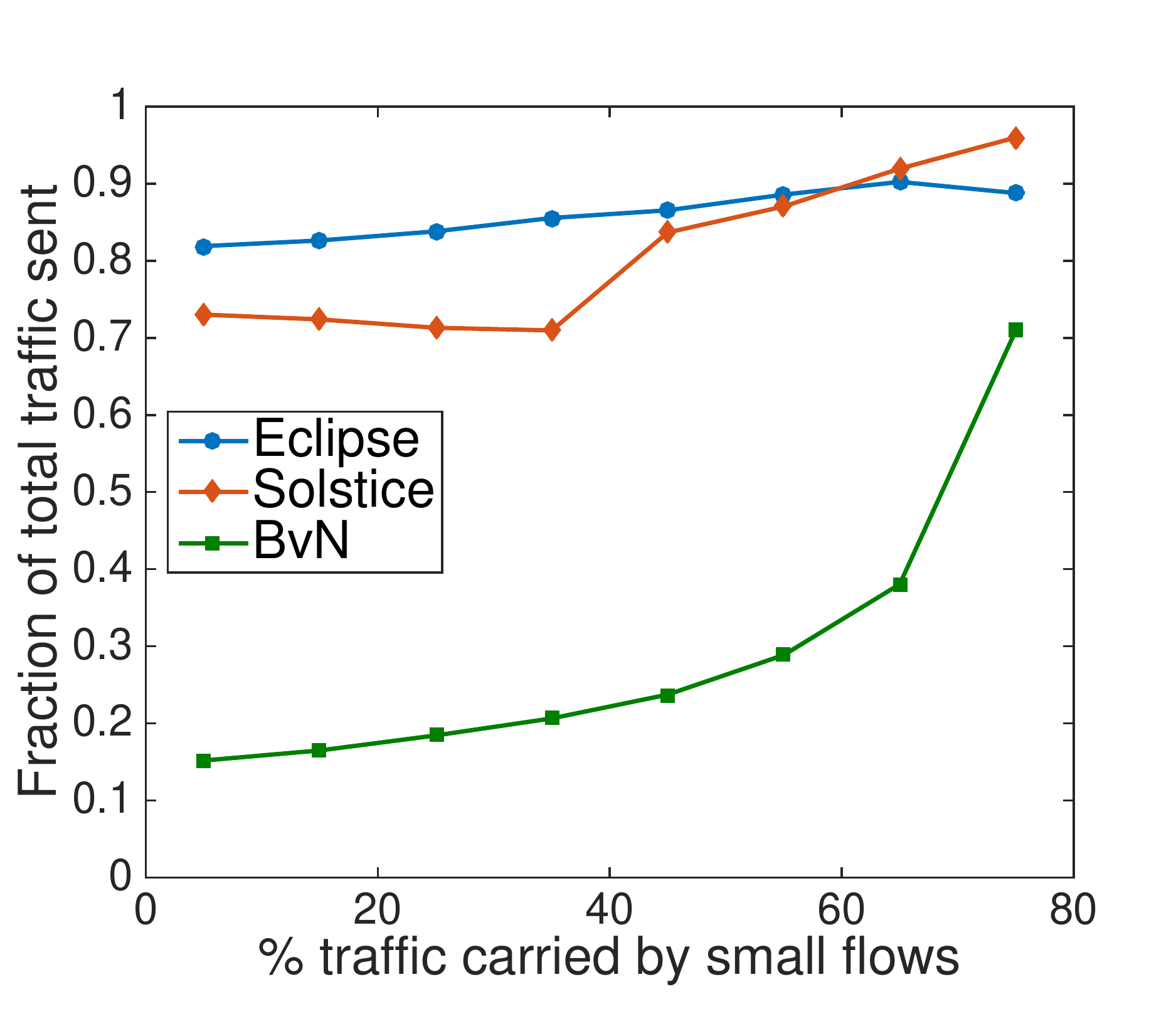}
     \label{fig:plot2}}
\hfil
     \subfigure[]{\includegraphics[height=50mm]{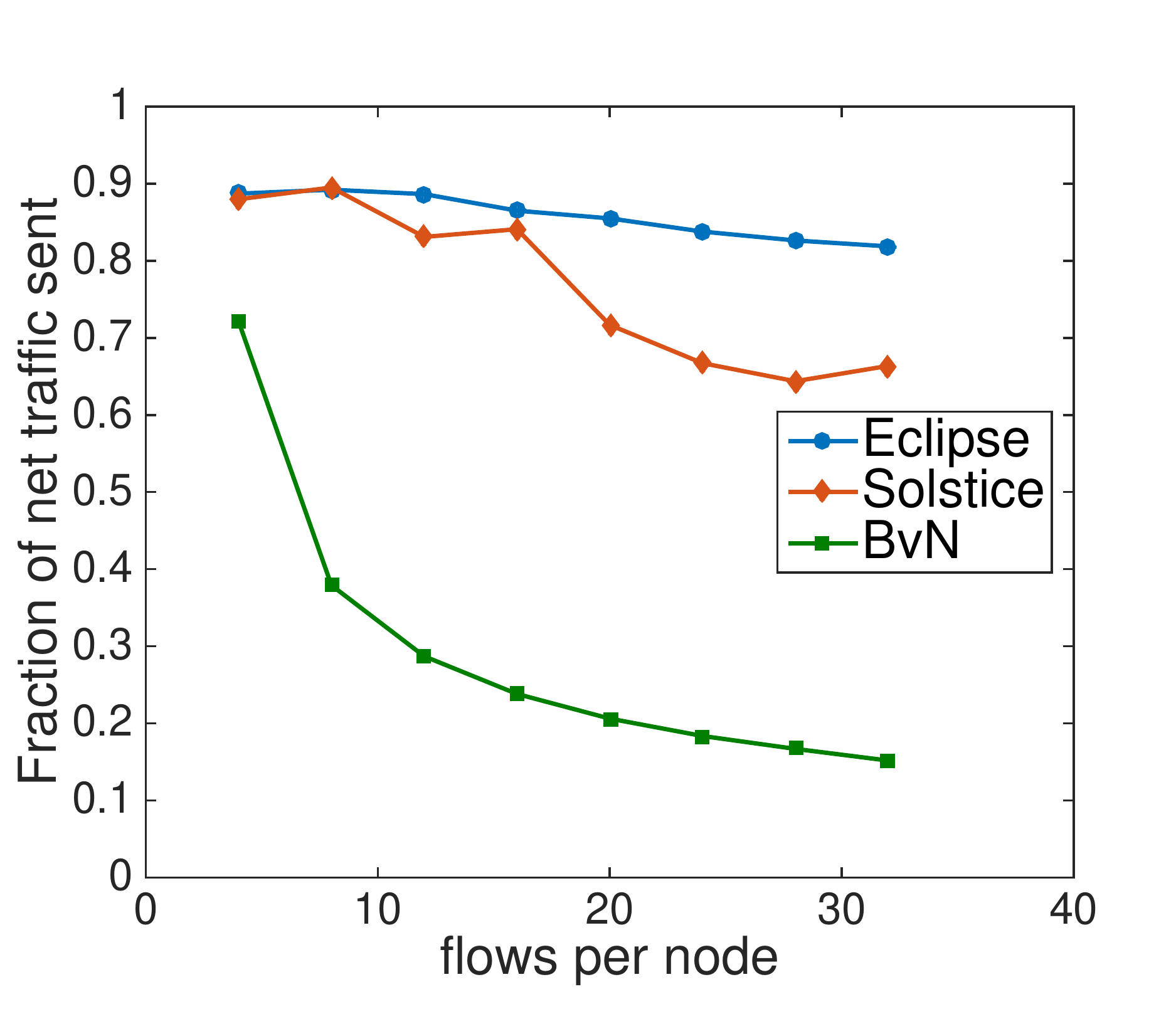}
     \label{fig:plot3}}}
    \caption{Performance comparison of Eclipse under single-block inputs.}
    \label{fig:plots1-3}
\end{figure*}

In this section, we compliment our analytical results with numerical
simulations to explore the effectiveness of our algorithms and compare
them to state-of-the-art techniques in the literature. We empirically
evaluate both the direct routing algorithm (Eclipse;
Algorithm~\ref{alg: sha}) and the indirect routing algorithm
(Eclipse++; Algorithm~\ref{alg: mha}).

\smallskip
\noindent{\bf Metric:}
We consider the total fraction of traffic delivered via the circuit
switch (sum-throughput) as the performance metric throughout this
section.

\smallskip
\noindent{\bf Schemes compared.}
Our experiments compare Eclipse against two existing algorithms for
direct routing:

\noindent{\em Solstice~\cite{Solstice}:} This is the state-of-the-art
hybrid circuit/packet scheduling algorithm for data centers. The key
idea in Solstice is to choose matchings with 100\% utilization. This
is achieved by thresholding the demand matrix and selecting a perfect
matching in each round. The algorithm presented in~\cite{Solstice}
tries to minimize the total duration of the window such that the
entire traffic demand is covered. In this paper, we have considered a
more general setting where the scheduling window $W$ is constrained.
To compare against Solstice in this setting, we truncate its output
once the total schedule duration exceeds $W$.

\noindent{\em Truncated Birkhoff-von-Neumann (BvN)
  decomposition~\cite{chang2000birkhoff}:} The second algorithm we
compare against is the truncated BvN decomposition algorithm. BvN
decomposition refers to expressing a doubly stochastic matrix as a
convex combination of permutation matrices and this decomposition
procedure has been extensively used in the context of packet switch
scheduling~\cite{Solstice,keslassy2003guaranteed,chang2000birkhoff}.
However BvN decomposition is oblivious to reconfiguration delay and
can produce a potentially large ($O(n^2)$) number of matchings.
Indeed, in our simulations BvN performs poorly.

Indirect routing is relatively new and to the best of our knowledge
our work is the first to consider use of indirect routing for
centralized scheduling.\footnote{Indirect routing in a distributed
  setting but without consideration of switch reconfiguration delay
  was studied in a recent work \cite{cao2014joint}.} In our second set
of simulations, we show that the benefits of indirect routing are in
addition to the ones obtained from switch configurations scheduling.
To this end, we compare Eclipse with Eclipse++ to quantify the
additional throughput obtained by performing indirect routing
(Algorithm~\ref{alg: mha}) on a schedule that has been (pre)computed
using Eclipse.

\smallskip
\noindent{\bf Traffic demands:}
We consider two classes of inputs: (a) single-block inputs and (b)
multi-block inputs (explained in Section~\ref{sec: DR
  setup}). Intuitively, single-block inputs are matrices which consist
of one $n\times n$ `block' that is sparse and skewed, and are similar
to the traffic demands evaluated in the Solstice
paper~\cite{Solstice}. Multi-block inputs, on the other hand, denote
traffic matrices that are composed of many sub-matrices each with
disparate properties such as sparsity and skew.

\smallskip
\noindent{\bf Network size:}
The number of ports is fixed in the range of 50--200. We find that the
relative performances stayed numerically stable over this range as
well as for increased number of ports.


\subsection{Direct Routing}
While maintaining the sum-throughput as the performance metric, we
vary the various parameters of the system model to gauge the
performance in different situations.


\subsubsection*{Single-block Inputs} \label{sec: DR setup}

For a single-block input, our simulation setup consists of a network with $100$ ports. The link rate of the circuit switch is normalized to $1$, and the scheduling window length is also 1 $(W=1)$. We consider traffic inputs where the maximum traffic to or from any port is bounded by $W$. Further, we let the reconfiguration delay $\delta = W / 100$. The traffic matrix is generated similar to~\cite{Solstice} as follows. We assume $4$ large flows and $12$ small flows to each input or output port. The large flows are assumed to carry $70\%$ of the link bandwidth, while the small flows deliver the remaining $30\%$ of the traffic. To do this, we let each flow be represented by a random weighted permutation matrix, i.e., we have
\begin{align}
T = \sum_{i=1}^{n_L} \frac{c_L}{n_L} P_i +  \sum_{i'=1}^{n_S} \frac{c_S}{n_S} P_{i'} + N \label{eq: simulation model}
\end{align}
where $n_L (n_S)$ denotes the number of large (small) flows and
$c_L(c_S)$ denotes the total percentage of traffic carried by the
large (small) flows. In this case, we have $n_L=4,n_S=12$ and
$c_L=0.7, c_S=0.3$. Further, we have added a small amount of noise $N$
--- additive Gaussian noise with standard deviation equal to $0.3\%$
of the link capacity --- to the non-zero entries to introduce some
perturbation. Each experiment below has been repeated 25 times.

\noindent
\textbf{Reconfiguration delay:} In Fig.~\ref{fig:plot1} we plot
sum-throughput while varying the reconfiguration delay from $W/3200$
to $4W/100$. Ecliplse achieves a throughput of at least 90\% for
$\delta \leq W/100$. We observe Eclipse to be consistently better than
Solstice although the difference is not pronounced until
$\delta > W/100$. The BvN decomposition algorithm has a large
throughput when the reconfiguration delay is small. As $\delta$
increases, it's performance gradually worsens.

\noindent
\textbf{Skew:} We control the skew by varying the ratio of the amount
of traffic carried by small and large flows in the input traffic
demand matrix ($c_L/c_S$ in Equation~\eqref{eq: simulation model}).
Fig.~\ref{fig:plot2} captures the scenario where the percentage
traffic carried by the small flows is varied from 5 to 75. We observe
that Eclipse is very robust to skew variations and is able to
consistently maintain a throughput of about 85\%. Solstice has a
slightly better performance at low skew (when small-flows carry
$\sim 75\%$ of traffic); but overall, is dominated by Eclipse.

\noindent
\textbf{Sparsity:} Finally, we tested the algorithms' dependence on
sparsity and plotted the results in Fig.~\ref{fig:plot3}. The total
number of flows is varied from 4 to 32, while fixing the ratio of the
number of large to small flows at 1:3. As the input matrix becomes
less sparse, the performance of algorithms degrades as expected.
However, for Eclipse, the reduction in the throughput is never more
than 10\% over the range of sparsity parameters considered. Solstice,
on the other hand, is affected much more severely by decreased
sparsity.

\begin{figure*}[t]
  \centerline{\subfigure[]{\includegraphics[height=50mm]{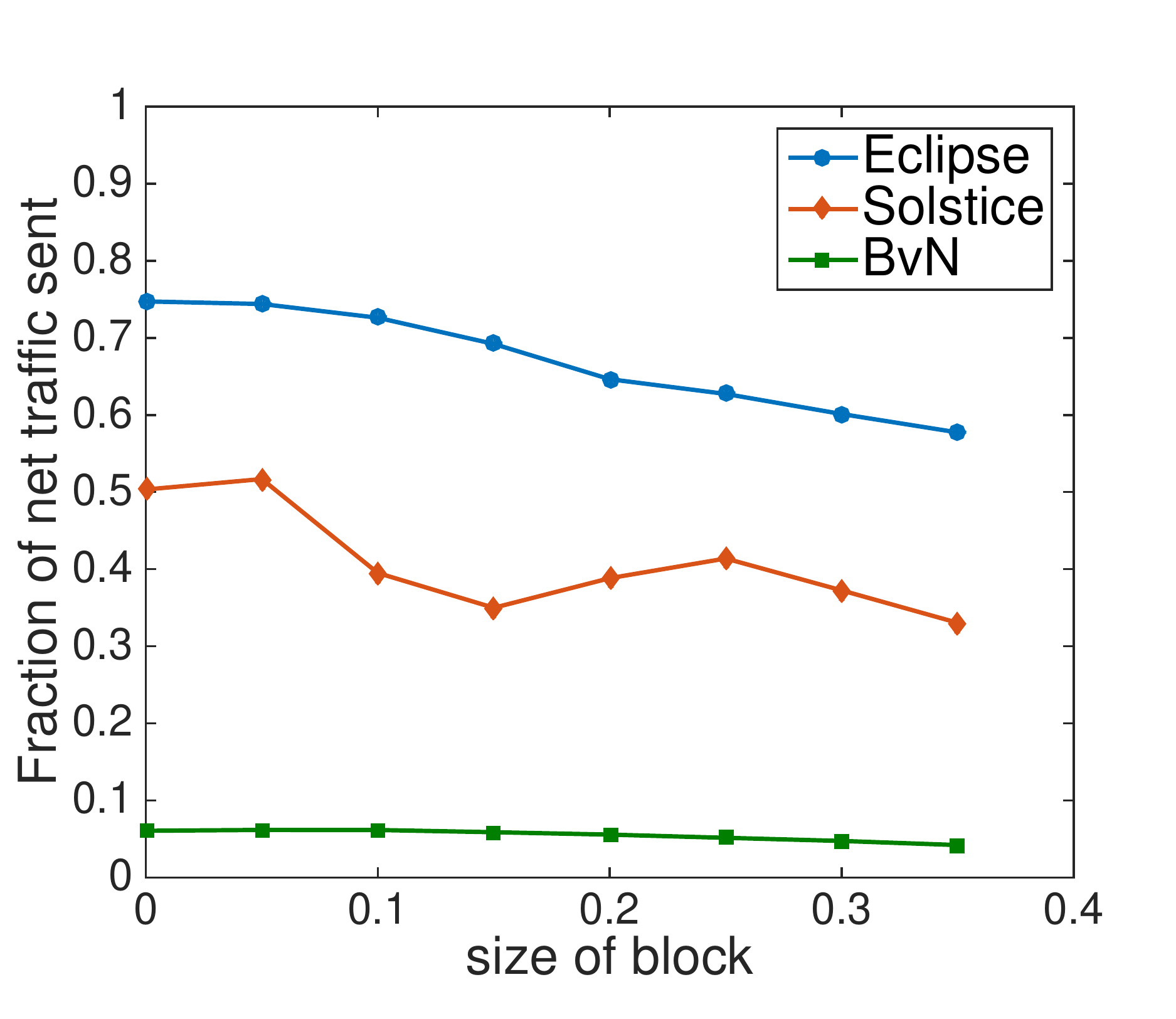}
     \label{fig:plot4}}
\hfil
     \subfigure[]{\includegraphics[height=50mm]{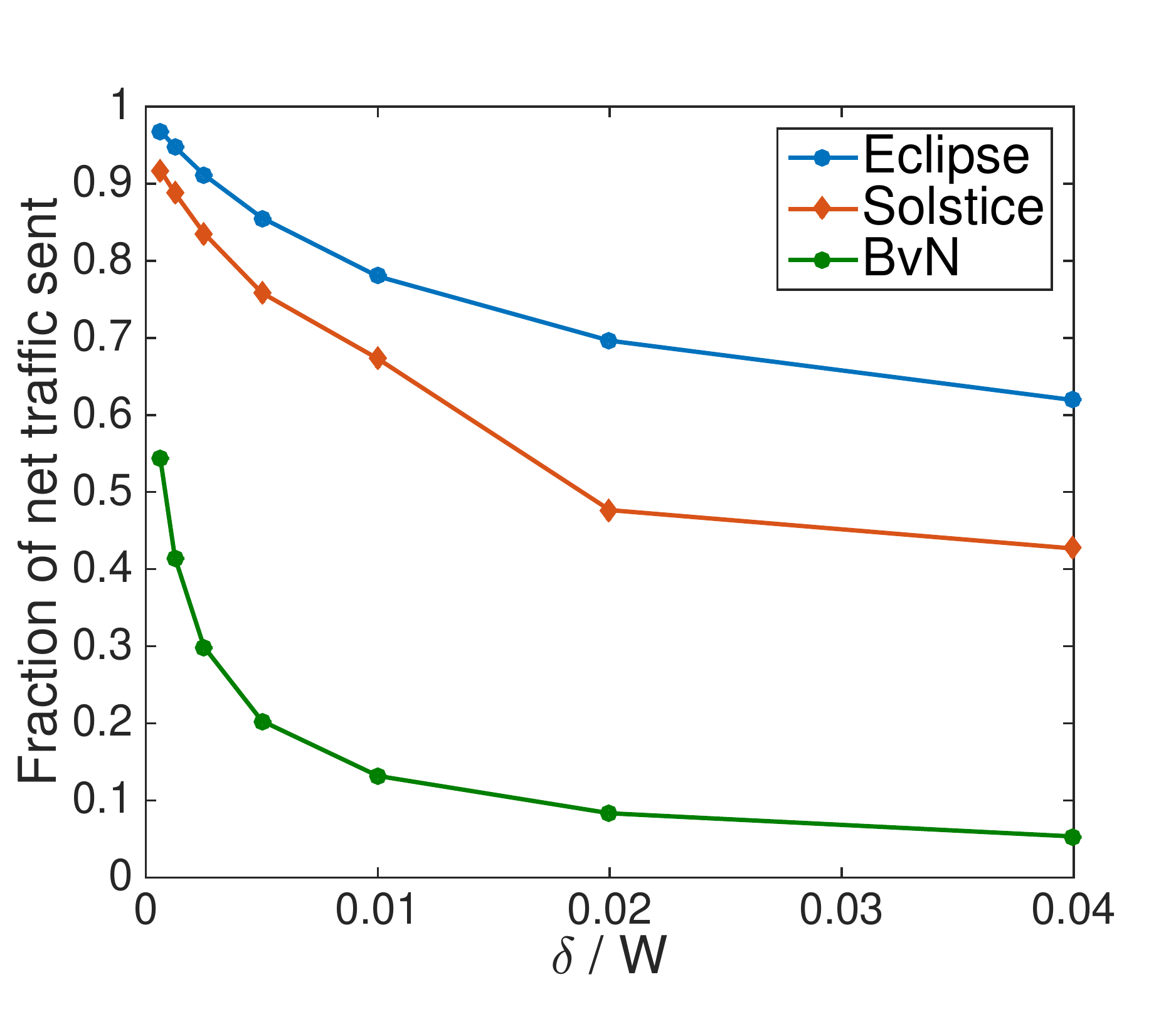}
     \label{fig:plot5}}
\hfil
     \subfigure[]{\includegraphics[height=50mm]{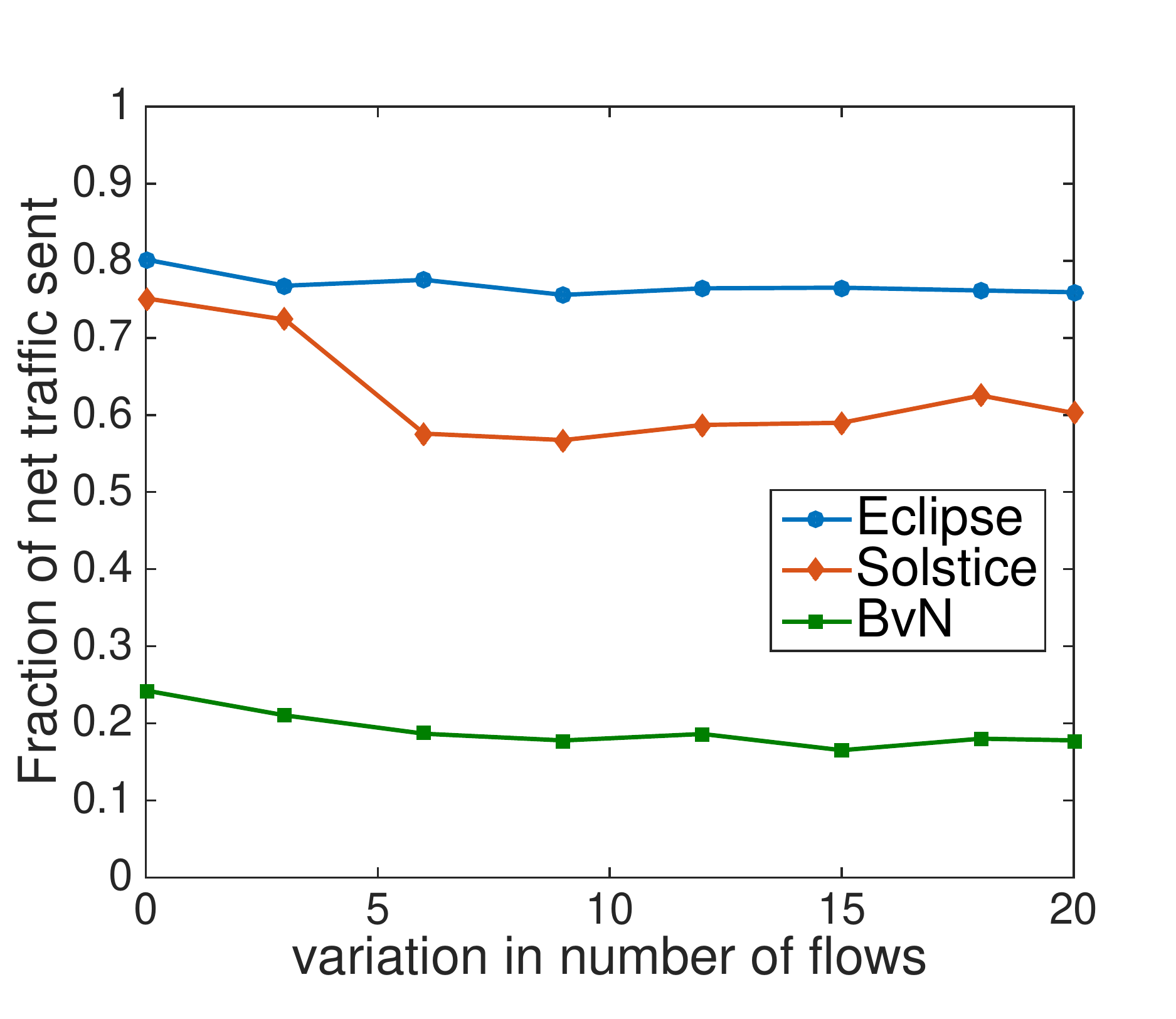}
     \label{fig:plot6}}
}
    \caption{Performance comparison of Eclipse under multi-block inputs.}
    \label{fig:plots4-6}
\vspace*{-2mm}
\end{figure*}

\begin{figure*}[t]
  \centerline{\subfigure[]{\includegraphics[height=50mm]{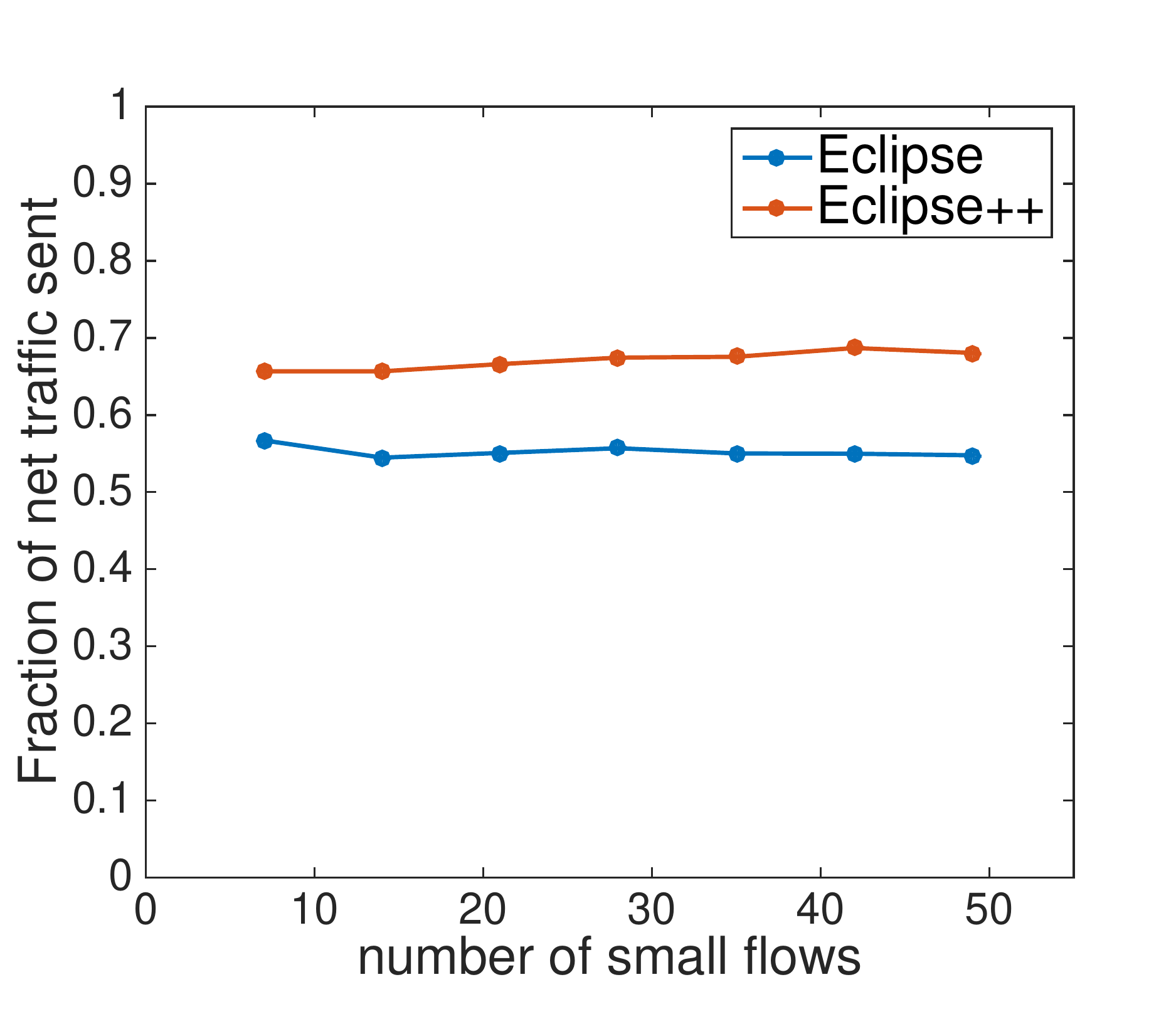}
     \label{fig:plot11}}
\hfil
     \subfigure[]{\includegraphics[height=50mm]{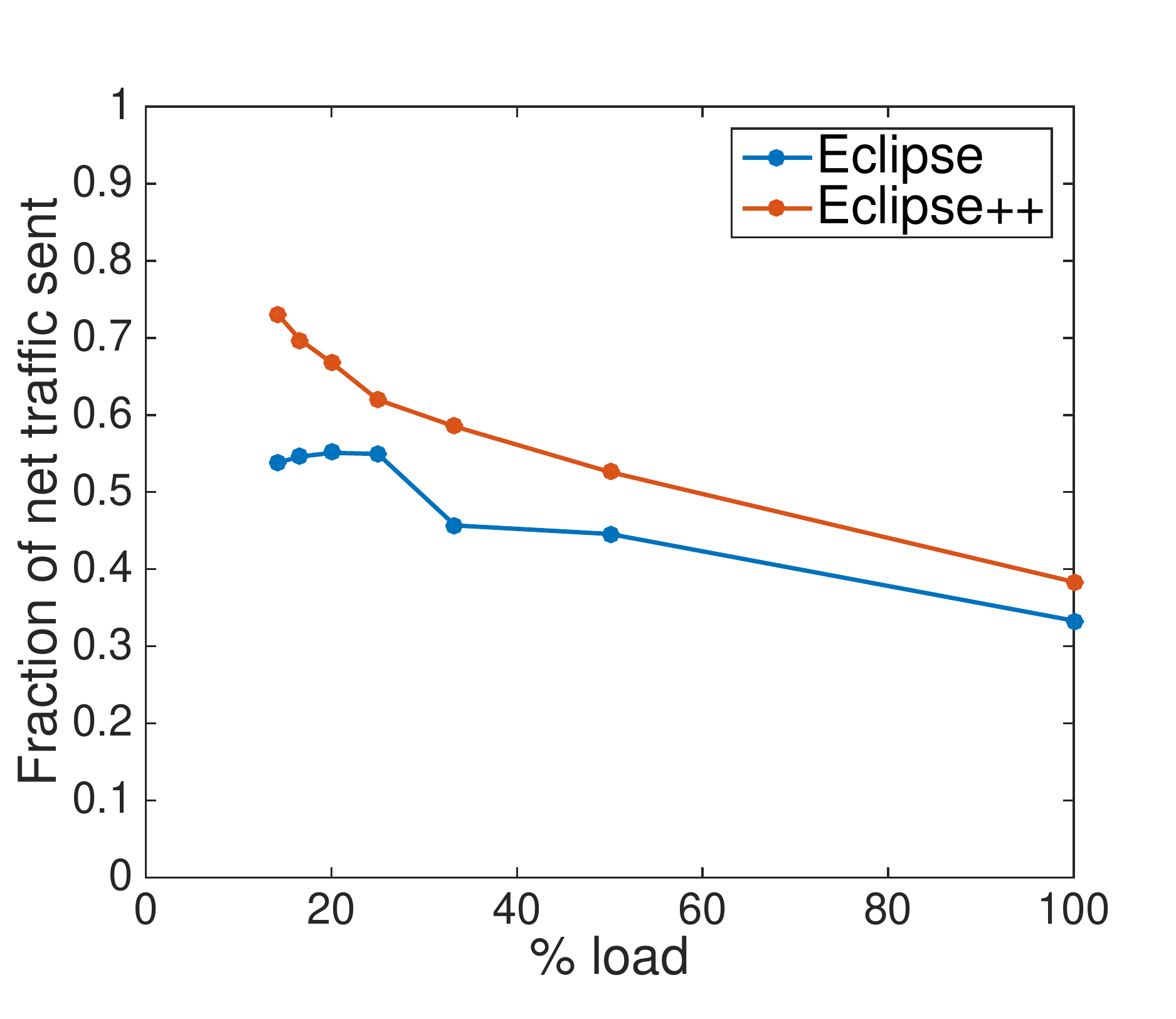}
     \label{fig:plot12}}
\hfil
     \subfigure[]{\includegraphics[height=50mm]{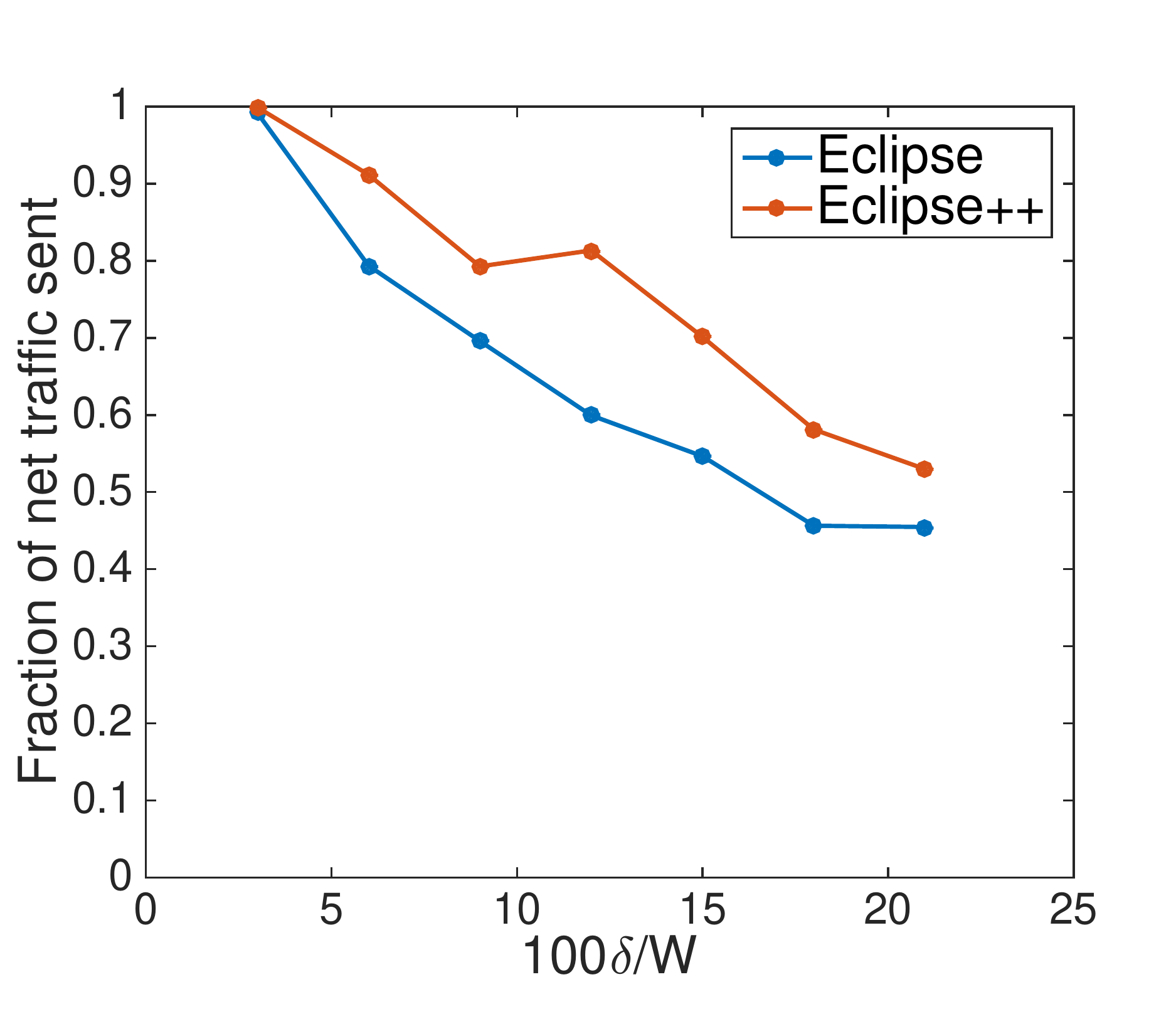}
     \label{fig:plot13}}
}
    \caption{Performance of Eclipse++ and Eclipse. Here Eclipse++ uses the schedule computed by Eclipse.}
    \label{fig:plots block 2}
\vspace*{-2mm}
\end{figure*}

\subsubsection*{Multi-Block Inputs}

We now consider a more complex traffic model for a 200 node network with block diagonal inputs of the form
\begin{align}
T =  \left[\begin{array}{ccc} B_1 & & \mathbf{0} \\ & \ddots & \\ \mathbf{0} & & B_m \end{array} \right]
\end{align}
where each of the component blocks $B_1,B_2,\ldots,B_m$ can have its
own sparsity (number of flows) and skew (fraction of traffic carried
by large versus small flows) parameters. The different blocks model
the traffic demands of different tenants in a shared data center
network such as a public cloud data center. To begin with, we consider
inputs with two blocks
$T = \left[ \begin{array}{cc} B_1 & \mathbf{0} \\ \mathbf{0} & B_2
  \end{array} \right]$
where $B_1$ is a $n_1\times n_1$ matrix with 4 large flows (carrying
70\% of the traffic) and 12 small flows (carrying 30\% of the traffic)
and $B_2$ is a $(200 - n_1)\times (200- n_1)$ matrix with uniform
entries (up to sampling noise).

\noindent
\textbf{Size of block:} Fig.~\ref{fig:plot4} plots the throughput as
the block size of $B_2$ is increased from 0 to 70. We observe a very
pronounced difference in the performance of Eclipse and Solstice:
Eclipse has roughly $1.5 - 2\times$ the performance of Solstice. These
findings are in tune with the intuition discussed in Section~\ref{sec:
  Motivation} --- the deteriorated performance of Solstice is due its
insistance on perfect matchings in each round.

\noindent
\textbf{Reconfiguration delay:} Fig.~\ref{fig:plot5} plots throughput
while varying the reconfiguration delay, for fixed size of $B_2$ to be
$50\times 50$. As expected, the throughput of Solstice and Eclipse
both degrade as the reconfiguration delay $\delta$ increases. However,
Eclipse throughput degrades at a much slower rate than Solstice. The
gap between the two is particularly pronounced for
$\delta/W \geq 0.02$, a numerical value that is well within range of
practical system settings.

\noindent
\textbf{Varying number of flows:}
In the final experiment we consider block diagonal inputs with 8
blocks of size $25 \times 25$ each. Each block carries
$10 + \lfloor \sigma*(U - 0.5)) \rfloor $ equi-valued flows where
$U\sim \text{unif }(0,1)$ and $\sigma$ is a parameter that controls
the variation in the number of flows. When increasing $\sigma$ from 0
to 20 we see from Fig.\ref{fig:plot6} that Eclipse is more or less
able to sustain its throughput at close to 80\%; whereas Solstice is
significantly affected by the variation.

\subsection{Indirect Routing}

In this section, we consider a 50 node network with traffic matrices
having varying number of large and small flows as before. We compare
the performance of the direct routing algorithm and the indirect
routing algorithm that is run on the schedule computed by Eclipse. To
understand the benefits of indirect routing, we focus on the regime
where the reconfiguration delay $\delta/W$ is relatively large and the
scheduling window $W$ is relatively long compared to the traffic
demand. This regime corresponds to realistic scenarios where the
circuit switch is not fully utilized,\footnote{Real data center
  networks often
  have low to moderate utilization (e.g,
  10--50\%)~\cite{benson2010network}.} but the reconfiguration delay
is large. In this setting of relatively large $\delta /W$, switch
schedules are forced to have only a small number of matchings, and
indirect routing is critical to support (non-sparse) demand matrices.
The following experiments numerically demonstrate the added gains of
indirect routing.

\noindent
\textbf{Sparsity:} Fig.~\ref{fig:plot11} considers a demand with 5
large flows and number of small flows varying from 7 to 49. The large
and the small flows each carry 50\% of the traffic. We let
$\delta = 16W/100$ and consider a load of 20\% (i.e., $W=5$, and
traffic load at each port is 1). We observe that the performance of
the Eclipse++ is roughly 10\% better than Eclipse.

\noindent
\textbf{Load:} As the network load increases (Fig.~\ref{fig:plot12}),
we see that indirect routing becomes less effective. This is because
at high load, the circuits do not have much spare capacity to support
indirect traffic. However, at low to moderate levels of load, indirect
routing provides a notable throughput gain over direct routing. For
example, we see close to 20\% improvement with Eclipse++ over Eclipse
at 15\% load.

\noindent
\textbf{Reconfiguration delay:} Finally, we observe the effect of
$\delta /W$ on throughput by varying $\delta$ from $3W/100$ to
$21W/100$. At smaller values of reconfiguration delay $\delta$ both
Eclipse and Eclipse++ are able to achieve near 100\% throughput. With
increasing $\delta$ both algorithms degrade with Eclipse++ providing
an additional gain of roughly 20\% over Eclipse.

\section{Final Remarks} \label{sec:conclusion}

We have studied scheduling in hybrid switch architectures with
reconfiguration delays in the circuit switch, by taking a fundamental
and first-principles approach. The connections to submodular
optimization theory allows us to design simple and fast scheduling
algorithms and show that they are near optimal --- these results hold
in the direct routing scenario and indirect routing provided switch
configurations are calculated separately. The problem of jointly
deciding the switch configurations {\em and} indirect routing policies
remains open. While submodular function optimization with nonlinear
constraints is in general intractable, the specific constraints in
Equations~\eqref{eq: submod
  const 1}, \eqref{eq: submod const 2} and~\eqref{eq: submod const 3}
perhaps have enough structure that they can be handled in a principled
way; this is a direction for future work.

In between the scheduling windows of $W$ time units, traffic builds up
at the ToR ports. This dynamic traffic buildup is known locally to
each of the ToR ports and perhaps this local knowledge can be used to
pick appropriate indirect routing policies in a distributed, dynamic
fashion. Such a study of indirect routing policies is initiated in a
recent work~\cite{cao2014joint}, but this work omitted the switching
reconfiguration delays. A joint study of distributed dynamic traffic
scheduling in conjunction with a static schedule of switch
configurations that account for reconfiguration delays is an
interesting direction of future research.


\section{Acknowledgments}
The authors would like to thank Prof. Chandra Chekuri and Prof. R. Srikant for the many helpful discussions. 

%
\bibliographystyle{abbrv}
{\scriptsize
\bibliography{sigproc}}  
%
%

\appendix

\section{Direct Routing - Proofs}
In the following section we give the proof of Theorem~\ref{thm: single hop submod}. 

\subsection{Proof of Theorem~\ref{thm: single hop submod}}
\begin{proof}
Clearly $f(\{\}) = 0$. Also for any $S \subseteq S' \in 2^{\mathbb{R}_+\times \mathcal{M}}$  we have
\begin{align*}
\min \left\{ \sum_{(\alpha,M) \in S}\alpha M, T\right\} & \leq \min \left\{ \sum_{(\alpha,M) \in S'}\alpha M, T\right\} \\
\Rightarrow f(S) & \leq f(S').
\end{align*}
Hence $f$ is normalized and monotone. The following identity holds for reals $a,b,c$: $\min (a+b,c) = \min(a,c) + \min(b,c-\min(a,c))$. Therefore for $S\in 2^{\mathbb{R}_+\times \mathcal{M}}$ and $(\alpha_0,M_0)\notin S$, we have
\begin{align}
f(S \cup \{(\alpha,M)\}) =& \sum_{i,j\in[n]} \min \left\{ \sum_{(\alpha,M)\in S}\alpha M + \alpha_0 M_0, T \right\}_{i,j} \\
=& \sum_{i,j\in[n]}  \left\{ \min \left\{ \sum_{(\alpha,M)\in S}\alpha M, T \right\}_{i,j} + \right. \\
 \min\left\{ \alpha_0 M_0,  T \right. & \left. - \min \left. \left\{ \sum_{(\alpha,M) \in S}\alpha M, T \right\} \right\}_{i,j}\right\} \\
\Rightarrow f_S(\{(\alpha,M)\})  =& \left\| \min \left\{ \alpha_0 M_0,T - \min \left\{ \sum_{(\alpha,M) \in S}\alpha M, T \right\}\right\}\right\|_1 \label{eq: prop11}
\end{align}
Now, for $S \subseteq S' \in 2^{\mathbb{R}_+\times \mathcal{M}}$ and $(\alpha,M) \notin S'$ since
\begin{align}
T - \min \left\{ \sum_{i\in S'}\alpha_i M_i, T\right\} \leq T - \min \left\{ \sum_{i\in S}\alpha_i M_i, T\right\} \label{eq: prop1 2}
\end{align}
together with Equation~\eqref{eq: prop11} this implies
\begin{align}
f_{S'}(\{(\alpha_0,M_0)\}) \leq f_{S}(\{(\alpha_0,M_0)\}).
\end{align}
Hence $f$ is submodular.
\end{proof}
Next, we present the proof of Theorem~\ref{thm: submod apx guarantee}. 

\subsection{Proof of Theorem~\ref{thm: submod apx guarantee}}
\begin{proof}
Recall the submodular sum-throughput function $f$ defined in Equation~\eqref{eq:func defn}. Let $\{(\alpha_1,M_1),\ldots,(\alpha_k,M_k)\}$ be the schedule returned by Algorithm~\ref{alg: sha}. Let $S_i = \{(\alpha_1,M_1),$ $\ldots,(\alpha_i,M_i)\}$  denote the schedule computed at the end of $i$ iterations of the \texttt{while} loop and let $S^*$ denote the optimal schedule. Now, since in the $i+1$-th iteration $(\alpha_{i+1},M_{i+1})$ maximizes $\frac{\min(\alpha M, T_\mathrm{rem}(i+1))\|_1}{(\alpha + \delta)} = \frac{f_{S_i}(\{(\alpha, M)\})}{(\alpha + \delta)}$ we have for any $(\alpha, M) \notin S_i$, 
\begin{align}
\frac{f_{S_i}(\{(\alpha, M)\})}{(\alpha + \delta)} &\leq \frac{f_{S_i}(\{(\alpha_{i+1}, M_{i+1})\})}{(\alpha_{i+1} + \delta_{i+1})} \\
\Rightarrow f_{S_i}(\{(\alpha, M)\}) &\leq \frac{(\alpha + \delta)}{(\alpha_{i+1} + \delta_{i+1})} f_{S_i}(\{(\alpha_{i+1}, M_{i+1})\}). \label{eq: ana f bound}
\end{align}
Now consider $\mathtt{OPT} - f(S_i)$ for some $i<k$. Since $f$ is monotone we have
\begin{align}
\mathtt{OPT} - f(S_i) &= f(S^*) - f(S_i) \\
&\leq f(S_i \cup S^*) - f(S_i) \\
&\leq \sum_{(\alpha,M)\in J^*} f_{S_i}(\{(\alpha, M)\}) \\
&\leq \sum_{(\alpha,M)\in J^*} \frac{(\alpha + \delta)}{(\alpha_{i+1} + \delta_{i+1})} f_{S_i}(\{(\alpha_{i+1}, M_{i+1})\}) \label{eq: proof bound} \\
&\leq \frac{W}{(\alpha_{i+1} + \delta_{i+1})} f_{S_i}(\{(\alpha_{i+1}, M_{i+1})\}) \label{eq: proof bound 2}
\end{align}
where $J^* := S^* \backslash S_i $ denotes the set of matchings that are present in the optimal solution but not in $S_i$, Equation~\eqref{eq: proof bound} follows from Equation~\eqref{eq: ana f bound}, and Equation~\eqref{eq: proof bound 2} follows because $\sum_{(\alpha,M)\in J^*}(\alpha + \delta) \leq \sum_{(\alpha,M)\in S^*}(\alpha + \delta) \leq W$. Next, observe that 
\begin{align}
f(S_{i+1}) &= f(S_i) + f_{S_i}(\{(\alpha_{i+1},M_{i+1})\}) \\
\Rightarrow \mathtt{OPT} - f(S_{i+1}) &= \mathtt{OPT} - f(S_i) -  f_{S_i}(\{(\alpha_{i+1},M_{i+1})\}) \\
&\leq (\mathtt{OPT} - f(S_i))\left(1 - \frac{(\alpha_{i+1} + \delta)}{W}\right) \label{eq: proof an bound} \\
&\leq (\mathtt{OPT} - f(S_0)) \prod_{i'=1}^{i+1} \left(1 - \frac{(\alpha_{i'} + \delta)}{W}\right) \\
&\leq \mathtt{OPT} \times e^{-\sum_{i'=1}^{i+1} (\alpha_{i'} + \delta)/W} \label{eq: exp inequality}
\end{align}
where Equation~\eqref{eq: proof an bound} follows from Equation~\eqref{eq: proof bound 2} and Equation~\eqref{eq: exp inequality} follows because of the identity $1-x \leq e^{-x}$. Now, since after the $k$-th iteration the \texttt{while} loop terminates, this implies $\sum_{i'=1}^k (\alpha_{i'}+\delta) > W$. However, if the entries of the input traffic matrix $T$ are bounded by $\epsilon W + \delta$, then no matching has a duration longer than $\epsilon W$. In particular $\alpha_k + \delta \leq \epsilon W \Rightarrow \sum_{i'=1}^{k-1}(\alpha_{i'}+\delta) \geq W(1-\epsilon)$. Thus, setting $i = k-2$ in Equation~\eqref{eq: exp inequality} we have 
\begin{align}
\mathtt{OPT} - f(S_{k-1}) &\leq \mathtt{OPT} \times e^{-\sum_{i'=1}^{k-1} (\alpha_{i'} + \delta)/W} \\
&\leq \mathtt{OPT} \times e^{-(1-\epsilon)} \\
\Rightarrow \mathtt{OPT} - \mathtt{ALG2} &\leq  \mathtt{OPT} \times e^{-(1-\epsilon)}. 
\end{align} 
Hence we conclude $\mathtt{ALG2} \geq \mathtt{OPT}(1 - e^{-(1-\epsilon)})$. 
\end{proof}

\subsection{Correctness} \label{sec: Correctness}
Consider traffic matrix $T\in \mathbb{Z}^{n\times n}$. Let $\mathcal{T}=\{T(i,j): i,j\leq [n]\}$ denote the distinct entries in the matrix $T$. Then, in the following, we show that the maximizer in 
\begin{align}
\max_{\alpha \in\mathbb{Z}, M\in\mathcal{M}}\frac{\|\min(T,\alpha M)\|_1}{\alpha + \delta} 
\end{align}
occurs for $\alpha \in \mathcal{T}$. 

For any matching $M\in\mathcal{M}$ let us define $f_M(\alpha) \triangleq \|\min(\alpha M, T)\|_1$
and let 
$
f(\alpha) \triangleq \max_{M\in\mathcal{M}} \frac{f_M(\alpha)}{\alpha + \delta}
$
\begin{prop} \label{prop: correctness}
$f_M(\alpha)$ is (i) non-decreasing, (ii) piece-wise linear where the corner points are from $\mathcal{T}$ and (iii) concave. 
\end{prop}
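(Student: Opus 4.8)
The plan is to reduce the statement to an elementary fact about the scalar ``clipping'' function $\alpha \mapsto \min(\alpha,t)$ for a fixed constant $t\ge 0$, since all three properties in question are preserved under finite sums. Write $\sigma$ for the permutation of $[n]$ with $M(i,\sigma(i))=1$ and $M(i,j)=0$ for $j\ne\sigma(i)$. Since every entry of $T$ is nonnegative, the off-support entries contribute $\min(0,T(i,j))=0$ to the entrywise minimum, so
\[
f_M(\alpha) \;=\; \|\min(\alpha M,T)\|_1 \;=\; \sum_{i=1}^n \min\!\big(\alpha,\,T(i,\sigma(i))\big) \;=\; \sum_{i=1}^n g_{t_i}(\alpha),
\]
where I set $t_i:=T(i,\sigma(i))$ (so $\{t_1,\dots,t_n\}\subseteq \mathcal{T}\cup\{0\}$) and $g_t(\alpha):=\min(\alpha,t)$.

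Next I would verify the three properties for each summand $g_t$, for $t\ge 0$ and $\alpha\ge 0$. Monotonicity is immediate: $g_t$ is the pointwise minimum of the two non-decreasing functions $\alpha\mapsto\alpha$ and $\alpha\mapsto t$, and a pointwise minimum of non-decreasing functions is non-decreasing. Piecewise linearity with a single corner at $\alpha=t$ is the explicit description $g_t(\alpha)=\alpha$ for $0\le\alpha\le t$ and $g_t(\alpha)=t$ for $\alpha\ge t$. Concavity follows because $g_t$ is the pointwise minimum of two affine functions, and a pointwise minimum of affine (hence concave) functions is concave.

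Finally I would assemble the pieces: a finite sum of non-decreasing functions is non-decreasing, a finite sum of concave functions is concave, and a finite sum of piecewise-linear functions is piecewise linear whose breakpoint set is contained in the union of the individual breakpoint sets, namely $\{t_1,\dots,t_n\}\subseteq\mathcal{T}$ (the breakpoint $0$, if it occurs, is the left endpoint of the domain and can be discarded). This yields (i), (ii) and (iii) for $f_M$. I do not expect a genuine obstacle here; the only point needing a word of care is the claim that $f_M$ has no corner points outside $\mathcal{T}$, which is handled by observing that on any open interval disjoint from $\{t_1,\dots,t_n\}$ every summand is affine, hence so is $f_M$.
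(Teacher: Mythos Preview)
Your proof is correct and follows essentially the same idea as the paper's: both observe that $f_M(\alpha)=\sum_{i}\min(\alpha,T(i,\sigma(i)))$ and read off the three properties from this decomposition. The only difference is cosmetic---the paper computes the slope explicitly on each interval between consecutive elements of $\mathcal{T}$ and checks it is non-increasing, whereas you invoke closure of monotonicity, piecewise linearity, and concavity under finite sums; the underlying content is identical.
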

\begin{proof}
It is easy to see (i) because if $\alpha_1\leq \alpha_2$ then $\min(\alpha_1 M,T)\leq \min(\alpha_2 M,T)$ entrywise and hence $f_M(\alpha_1) \leq f_M(\alpha_2)$. To see (ii) consider any $t_1 < t_2 \in \mathcal{T}$ such that no other element of $\mathcal{T}$ is between $t_1$ and $t_2$. Then for $t_1\leq \alpha \leq t_2$ we have 
\begin{align}
f_M(\alpha) = \|\min(\alpha M, T)\|_1 = \sum_{\substack{(i,j)\in M \\ T(i,j) \leq t_1}} \min(\alpha, T(i,j))  \\ 
+ \sum_{\substack{(i,j)\in M \\ T(i,j) \geq t_1}} \min(\alpha, T(i,j)) \\
= \sum_{\substack{(i,j)\in M \\ T(i,j) \leq t_1}} T(i,j) + \sum_{\substack{(i,j)\in M \\ T(i,j) \geq t_1}} \alpha \\
= \sum_{\substack{(i,j)\in M \\ T(i,j) \leq t_1}} T(i,j) + \left|\{(i,j)\in M: T(i,j) \geq t_1\}\right| \alpha \label{eq: correctness}
\end{align}  
Thus $f_M(\cdot)$ is linear for $t_1\leq \alpha \leq t_2$ and (ii) follows. (iii) also follows from Equation~\eqref{eq: correctness} by observing that 
\begin{align}
\left|\{(i,j)\in M: T(i,j) \geq t_1\}\right| \geq \left|\{(i,j)\in M: T(i,j) \geq t_2\}\right|
\end{align} 
for any $t_1 < t_2\in \mathcal{T}$. Hence the slope of the piece-wise linear function $f_M(\alpha)$ is non-increasing as $\alpha$ increases. In other words, $f_M(\alpha)$ is concave. 
\end{proof}
\begin{prop} \label{prop: correctness prop 2}
$\arg\max_{\alpha} \frac{f_M(\alpha)}{\alpha + \delta} \in \mathcal{T}$
\end{prop}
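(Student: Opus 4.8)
The plan is to exploit the three structural facts about $f_M$ just established in Proposition~\ref{prop: correctness} --- that $f_M$ is non-decreasing, concave, and piecewise linear with all of its corner points lying in the finite set $\mathcal{T}$. The key elementary observation driving the argument is that a ratio of two affine functions of $\alpha$ is monotone on any interval, so on any sub-interval of $\mathbb{R}_+$ on which $f_M$ is affine, the quantity $f_M(\alpha)/(\alpha+\delta)$ attains its maximum at one of the two endpoints of that sub-interval.

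Concretely, I would first list the elements of $\mathcal{T}$ (which includes $0$, since $T$ is sparse) as $0 = t_0 < t_1 < \cdots < t_r$, and partition the domain into the closed intervals $[t_0,t_1], [t_1,t_2], \ldots, [t_{r-1}, t_r]$ together with the tail ray $[t_r, \infty)$. By Proposition~\ref{prop: correctness}(ii) and Equation~\eqref{eq: correctness}, on each bounded piece $[t_{j-1},t_j]$ we may write $f_M(\alpha) = c_j + s_j\alpha$ with constants $c_j \ge 0$ and integer slope $s_j = |\{(u,v)\in M : T(u,v) \ge t_j\}| \ge 0$; and on $[t_r,\infty)$ the function $f_M$ is constant, equal to $\sum_{(u,v)\in M} T(u,v)$, because $t_r = \max_{i,j} T(i,j)$.

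Next, on $[t_{j-1},t_j]$ write $g(\alpha) = (c_j + s_j\alpha)/(\alpha+\delta)$; then
\[ g'(\alpha) = \frac{s_j(\alpha+\delta)-(c_j+s_j\alpha)}{(\alpha+\delta)^2} = \frac{s_j\delta - c_j}{(\alpha+\delta)^2}, \]
whose sign is constant (independent of $\alpha$) on the whole piece. Hence $g$ is monotone there and $\max_{\alpha\in[t_{j-1},t_j]} g(\alpha) = \max\{g(t_{j-1}), g(t_j)\}$, both arguments being elements of $\mathcal{T}$. On the tail $[t_r,\infty)$, since $f_M$ is constant, $g(\alpha)$ is strictly decreasing, so its maximum there is $g(t_r)$. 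Taking the maximum of these finitely many interval-maxima shows that $f_M(\alpha)/(\alpha+\delta)$ attains its global maximum over $\mathbb{R}_+$ at some $t_j \in \mathcal{T}$, which is the claim. Since $f_M(0)=0$ the value $g(t_0)=g(0)=0$ can be dropped, and because $\mathcal{T}\subseteq\mathbb{Z}$ the same conclusion covers the integer-valued formulation $\alpha\in\mathbb{Z}$ used in the maximization step of the algorithm.

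I do not anticipate a genuine obstacle here: the proof is a short case analysis built on Proposition~\ref{prop: correctness}. The only points needing a little care are the bookkeeping at the left endpoint $\alpha = 0$ and on the unbounded tail, and the observation that when $s_j\delta = c_j$ the ratio is constant along an entire piece --- so the statement should be read as ``$\mathcal{T}$ contains a maximizer'' rather than ``every maximizer lies in $\mathcal{T}$.'' Finally, the corresponding statement for $f(\alpha) = \max_{M\in\mathcal{M}} f_M(\alpha)/(\alpha+\delta)$ follows immediately, since a pointwise maximum of finitely many functions, each maximized over the finite set $\mathcal{T}$, is itself maximized over $\mathcal{T}$.
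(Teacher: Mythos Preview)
Your proof is correct and follows essentially the same route as the paper: write $f_M$ as an affine function on each interval between consecutive elements of $\mathcal{T}$, differentiate the ratio $f_M(\alpha)/(\alpha+\delta)$, observe the numerator of the derivative is independent of $\alpha$, and conclude the maximum on each piece is attained at an endpoint. Your version is slightly more careful than the paper's in that you explicitly handle the tail ray $[t_r,\infty)$ and note the distinction between ``$\mathcal{T}$ contains a maximizer'' and ``every maximizer lies in $\mathcal{T}$,'' but the core argument is identical.
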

\begin{proof}
This follows from Proposition~\ref{prop: correctness}-(ii). Let $f_M(\alpha)$ be linear for $\alpha\in [t_1,t_2]$. Then it can be written as $f_M(\alpha) = f_M(t_1) + m(\alpha - t_1)$ for some slope $m\geq 0$. Now, consider the derivation of the function $f_M(\alpha)/(\alpha + \delta)$ in the interval $[t_1,t_2]$: 
\begin{align}
\frac{d}{d\alpha} \left(\frac{f_M(\alpha)}{\alpha + \delta} \right) = \frac{d}{d\alpha}\left(\frac{f_M(t_1) + m(\alpha - t_1)}{\alpha + \delta} \right) \\
= \frac{(\alpha + \delta)(m) - (f_M(t_1)+m(\alpha-t_1))}{(\alpha + \delta)^2} \\
= \frac{\delta m - f_M(t_1) + mt_1}{(\alpha + \delta)^2} \label{eq: correct slope}
\end{align}
Note that the numerator of Equation~\eqref{eq: correct slope} is independent of $\alpha$ and the denominator is strictly positive. Hence the sign (i.e., $>0, <0$ or $=0$) of the slope of $f_M(\alpha)/(\alpha + \delta)$ is the same in the interval $[t_1,t_2]$. This proves that the maxima must occur at either of the extreme points $t_1$ or $t_2$. By Proposition~\ref{prop: correctness}-(ii) we know that the $f_M(\alpha)$ is piece-wise linear with the corner points from the set $\mathcal{T}$. Thus we can conclude that the maxima must occur at one of the points in $\mathcal{T}$.  
\end{proof}
\begin{thm}
$\arg\max_{\alpha} f(\alpha) \in \mathcal{T}$ 
\end{thm}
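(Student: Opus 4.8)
The plan is to reduce the claim about $f(\alpha) = \max_{M \in \mathcal{M}} f_M(\alpha)/(\alpha+\delta)$ to the already-established fact (Proposition~\ref{prop: correctness prop 2}) that each individual $f_M(\alpha)/(\alpha+\delta)$ attains its maximum over $\alpha$ at a point of $\mathcal{T}$. The key structural observation is that the set of matchings $\mathcal{M}$ is finite, so the pointwise maximum $f(\alpha)$ is a maximum of finitely many functions, each of which is (by Proposition~\ref{prop: correctness prop 2}) maximized at some element of $\mathcal{T}$.

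\begin{proof}
Since $\mathcal{M}$ is finite, write $\mathcal{M} = \{M_1, \ldots, M_N\}$ and let $\alpha_r^* \in \arg\max_{\alpha} f_{M_r}(\alpha)/(\alpha+\delta)$; by Proposition~\ref{prop: correctness prop 2} we may take $\alpha_r^* \in \mathcal{T}$. Let $v_r := f_{M_r}(\alpha_r^*)/(\alpha_r^* + \delta)$ be the corresponding maximum value of the $r$-th curve, and let $r^\star \in \arg\max_r v_r$. I claim $\alpha_{r^\star}^*$ maximizes $f$. Indeed, for any $\alpha$,
\begin{align}
f(\alpha) = \max_{r} \frac{f_{M_r}(\alpha)}{\alpha+\delta} \leq \max_r v_r = v_{r^\star} = \frac{f_{M_{r^\star}}(\alpha_{r^\star}^*)}{\alpha_{r^\star}^* + \delta} \leq f(\alpha_{r^\star}^*),
\end{align}
where the last inequality holds because $f(\alpha_{r^\star}^*) = \max_r f_{M_r}(\alpha_{r^\star}^*)/(\alpha_{r^\star}^* + \delta) \geq f_{M_{r^\star}}(\alpha_{r^\star}^*)/(\alpha_{r^\star}^* + \delta)$. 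Hence $\alpha_{r^\star}^* \in \arg\max_\alpha f(\alpha)$, and since $\alpha_{r^\star}^* \in \mathcal{T}$, the claim follows.
\end{proof}

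The only subtlety worth flagging is a degenerate edge case: if $T$ is identically zero then $\mathcal{T} = \{0\}$ and $f \equiv 0$, so the statement holds trivially; otherwise $\mathcal{T}$ is nonempty and the argument above applies verbatim. One should also be slightly careful that the argmax over $\alpha \in \mathbb{Z}$ (or $\mathbb{R}_+$, depending on which version of the optimization one is reading) is nonempty — this is guaranteed because each $f_{M_r}(\alpha)/(\alpha+\delta)$ is continuous, nonnegative, and tends to $0$ as $\alpha \to \infty$ (the numerator is bounded by $\|T\|_1$), so the supremum is attained. I do not expect any real obstacle here: the content is entirely in Propositions~\ref{prop: correctness} and~\ref{prop: correctness prop 2}, and this final theorem is just the observation that "maximized at a point of $\mathcal{T}$" is preserved under taking a finite pointwise maximum.
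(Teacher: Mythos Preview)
Your proof is correct and follows essentially the same approach as the paper: both arguments interchange the order of maximization, writing $\max_\alpha \max_M = \max_M \max_\alpha$, and then invoke Proposition~\ref{prop: correctness prop 2} to conclude that the inner maximizer over $\alpha$ lies in $\mathcal{T}$ for each $M$. Your version is slightly more explicit about the finiteness of $\mathcal{M}$ and the existence of the argmax, but the core idea is identical.
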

\begin{proof}
This follows directly from Proposition~\ref{prop: correctness prop 2}. Notice that 
\begin{align}
\max_{\alpha} f(\alpha) = \max_\alpha \max_M \frac{f_M(\alpha)}{\alpha + \delta} \\
= \max_{M} \left( \max_\alpha \frac{f_M(\alpha)}{\alpha + \delta}\right)
\end{align}
But the maximizer of $f_M(\alpha)/(\alpha + \delta)$ belongs to $\mathcal{T}$ for any $M$. Hence we conclude that the maximizer of $f(\alpha)$ also belongs to $\mathcal{T}$. The Theorem follows. 
\end{proof}

\balancecolumns
\end{document}